\newcommand{\cS}{{\mathcal{S}}}
\newcommand{\tr}{{\rm Tr}}
\newcommand{\gZ}{{\mathrm Z}}
\newcommand{\F}{{\mathbb F}}
\newcommand{\gC}{\mathrm{C}}
\newcommand{\cM}{\mathcal{M}}
\newcommand{\cC}{{\mathcal C}}
\newtheorem{theorem}{Theorem}[section]
\newtheorem{lemma}[theorem]{Lemma}
\theoremstyle{definition}
\newtheorem{definition}[theorem]{Definition}
\newtheorem{corollary}[theorem]{Corollary}
\theoremstyle{remark}
\numberwithin{equation}{section}
\begin{document}

\title{An Unified Approach on Constructing of MDS Self-dual Codes via Reed-Solomon Codes}

\author{Aixian Zhang}
\address{Department of Mathematical Sciences, Xi'an  University of Technology,
Shanxi, 710054, China.}
\email{zhangaixian1008@126.com}

\author{Keqin Feng}
\address{Department of Mathematical
Sciences, Tsinghua University,
 Beijing, 100084, China. }
\email{fengkq@tsinghua.edu.cn}
%    \thanks will become a 1st page footnote.
%\thanks{The first author was supported in part by NSF Grant \#000000.}

%    General info
\subjclass[2010]{11T06, 11T55}

%\date{Januar1 1, 2001 and, in revised form, June 22, 2001.}

%\dedicatory{This paper is dedicated to our advisors.}

\keywords{MDS codes, Self-dual codes, Generalized Reed-Solomon codes, Linearlized polynomials.}

\begin{abstract}
Based on the fundamental results on MDS self-dual codes over finite fields constructed via generalized
Reed-Solomon codes \cite{JX} and extended generalized Reed-Solomon codes \cite{Yan}, many series of MDS self-dual codes with different length have been obtained recently by a variety of constructions and individual computations.
In this paper, we present an unified approach to get several previous results with concise statements and
simplified proofs, and some new constructions on MDS self-dual codes. In the conclusion section we raise two
open problems.
\end{abstract}

\maketitle

%% The correct journal style for \specialsection is all uppercase; a known bug
%% in amsart.cls prevents this, so input must be uppercase until it is fixed.
%\specialsection*{This is a Special Section Head}

%%%%%%%%%%%%%%%%%%%%%%%%%%%%%%%%%%%%%%%%%%%%%%%%%%%%%%%%%%%%%%%%%%%%%%%%
%\footnote{Here is an example of a footnote. Notice that this footnote
%text is running on so that it can stand as an example of how a footnote
%with separate paragraphs should be written.
%\par
%And here is the beginning of the second paragraph.}%
%%%%%%%%%%%%%%%%%%%%%%%%%%%%%%%%%%%%%%%%%%%%%%%%%%%%%%%%%%%%%%%%%%%%%%%%

\section{Introduction}\label{sec-one}
Let $\cC$ be a linear code with parameters $[n,k,d]_q.$ Namely, $\cC$
is a $\F_q$-subspace of $\F^n_q$ with dimension $k$ and the minimum (Hamming) distance $d.$
We have the Singleton bound $d \leq n-k+1.$ $\cC$ is called MDS code if $d=n-k+1.$
The dual code of $\cC$ is defined by
$$
\cC^{\perp}=\{ v \in \F^n_q: (v,c)=0 \ \mbox{for all} \ c \in \cC \}
$$
where for $v=(v_1, v_2,\ldots,v_n)$ and $c=(c_1, c_2,\ldots,c_n),(v,c)=\sum^n\limits_{i=1}v_i c_i \in \F_q $
is the usual inner product in $\F^n_q$. $\cC^{\perp}$ is a linear code with parameters
$[n,k^{\perp}]_q$ where $k^{\perp}=n-k.$ If $\cC=\cC^{\perp},\cC$ is called self-dual.
For self-dual code $\cC$, the length $n$ of the code $\cC$ should be even by $k=n-k.$

MDS self-dual codes have attracted a lot of attention in recent years by their theoretical interests
in coding theory, many applications in cryptography and combinatorics. For such codes, $k=\frac{n}{2}$
and $d=\frac{n}{2}+1$ are determined by the length $n$. One of the basic problems is that for a fixed
prime power $q,$ which even number $n \geq 2$ can be the length of an MDS self-dual code over finite field $\F_q$ ?

This problem has been solved in the case $q=2^m$ by Grassl and Gulliver \cite{GG}. From now on, we consider $q=p^m$
where $p$ is an odd prime number and $m \geq 1$. Several families of MDS self-dual codes over $\F_q$ have been
constructed with length $n$ satisfying certain conditions by using generalized Reed-Solomon (GRS for short) codes
and extended generalized Reed-Solomon (EGRS for short) codes \cite{FF}-\cite{GK},\cite{GKL},\cite{JX}-\cite{LLL},\cite{Yan},
orthogonal designs \cite{HK,SSSC}, extended cyclic duadic codes and negacyclic codes \cite{Guenda}. Roughly speaking,
the first approach is to look for the GRS codes and EGRS codes as candidates of MDS codes, then to find sufficient
conditions satisfied by length $n$ such that the codes are self-dual. The last two approaches are to look for
the self-dual codes given by orthogonal designs and (nega-)cyclic codes and select ones being MDS codes.
A table of MDS self-dual codes over $\F_q$ is provided in \cite{SSSC} for length $n \leq 12$ and odd prime number $p \leq 109.$

In this paper, we focus on the first approach to consider the MDS codes given by GRS and EGRS codes. L.Jin,  C.Xing \cite{JX}
and H.Yan \cite{Yan} present basic results on necessary and sufficient conditions for the GRS codes and EGRS codes
being self-dual respectively. By using these conditions, several families of MDS self-dual codes with various length $n$
have been constructed \cite{FF,FLLL,JX,LLL,Yan} with careful computations.

In Section \ref{sec-two}, we introduce the basic results given in \cite{JX} and \cite{Yan}, but we provide an unified point of view on
the constructions given there. In Section \ref{sec-three}, we select some known results stated and proved with our approach.
In Section \ref{sec-four}, we construct new MDS self-dual codes with our approach. In Section \ref{sec-five}, we make conclusion and raise two open problems.

\section{Constructions of MDS Self-dual codes via GRS Codes}\label{sec-two}

\subsection{ (Extended) Generalized RS Codes}

\begin{definition}
Let $ q=p^m, \cS=\{a_1, a_2, \ldots, a_n\}$ be a subset of $\F_{q}$ with $n$
distinct elements (so that $n \leq q$), $v_1, v_2, \ldots, v_n$ be nonzero elements in $\F_q$
(not necessarily distinct), $v=(v_1, v_2, \ldots, v_n ).$ For $ 1 \leq k \leq n-1,$ the GRS code
is defined by
$$
\cC_{grs}(\cS,v,q)=\left\{ \begin{array}{ll} c_f=(v_1f(a_1),v_2f(a_2),\ldots,v_nf(a_n)) \in \F^n_q: \\
f(x) \in \F_q[x], \deg f \leq k-1 \end{array}\right\}.
$$
This is an MDS (linear) code over $\F_q$ with parameters $[n,k,d]_q, d=n-k+1.$ The extended GRS code is defined by
$$
\cC_{egrs}(\cS,v,q)=\left \{\begin{array}{ll} c_f=(v_1f(a_1),v_2f(a_2),\ldots,v_nf(a_n),f_{k-1}) \in \F^{n+1}_q: \\ f(x) \in \F_q[x], \deg f \leq k-1 \end{array} \right \}
$$
where $f_{k-1}$ is the coefficient of $x^{k-1}$ in $f(x).$ This is also an MDS code over $\F_q$ with parameters
$[n+1,k,d]_q, d=n-k+2.$
\end{definition}

A sufficient condition on set $\cS$ has been given in \cite{JX} and \cite{Yan} for $\mathcal{C}_{grs}$
and $\mathcal{C}_{egrs}$    being
self-dual. From the proofs we can see that the sufficient condition is also necessary.

For $\cS=\{a_1, a_2, \ldots, a_n\} \subseteq \F_q,$  we denote
$$
\Delta_{\cS}(a_i)=\prod_{\scriptstyle 1 \leq j \leq n
\atop \scriptstyle j \neq i}(a_i -a_j) \in \F^{\ast}_q.
$$
Let $\eta_q: \F^{\ast}_q \rightarrow \{\pm 1\}$ be the quadratic (multiplicative) character of $\F_q.$
Namely, for $b \in \F^{\ast}_q$,
$$
\eta_q(b) =  \left \{
\begin{array}{ll}
1, & \mbox{if} \ b \ \mbox{is a square in } \F^{\ast}_q, \\
-1, & \mbox{otherwise} .
\end{array}
\right.
$$

Now we introduce the basic results given in \cite{JX} and \cite{Yan}.

\begin{theorem}\label{thm-main}
Let $a_1, a_2, \ldots, a_n$ be distinct elements in $\F_q, \cS=\{a_1, a_2, \ldots, a_n\}.$

(1) \ (\cite{JX}) Suppose that $n$ is even. There exists $v=(v_1, v_2, \ldots, v_n ) \in (\F^{\ast}_q)^{n}$ such that the (MDS)
code $\cC_{grs}(\cS,v,q)$ is self-dual if and only if all $\eta_q(\Delta_{\cS}(a)) \ (a \in \cS)$
are the same.

(2)\ (\cite{Yan}) Suppose that $n$ is odd. There exists $v=(v_1, v_2, \ldots, v_n )\in (\F^{\ast}_q)^{n}$ such that the (MDS) code
$\cC_{egrs}(\cS,v,q)$ is self-dual code with length $n+1$ if and only if
$\eta_q (-\Delta_{\cS}(a))=1$ for all $a \in \cS.$
\end{theorem}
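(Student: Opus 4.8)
The plan is to reduce self-duality to self-orthogonality, and then to translate self-orthogonality into a linear system in the squares $w_i = v_i^2$, which I can solve explicitly via a classical Lagrange interpolation identity.

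For Part (1), I first note that $\cC_{grs}(\cS,v,q)$ has dimension $k$ while its dual has dimension $n-k$, so self-duality forces $k = n/2$; since the dimension is then automatically correct, the code is self-dual exactly when it is self-orthogonal. Writing codewords as $c_f$ with $\deg f \le k-1$, self-orthogonality reads $\sum_{i=1}^n v_i^2 f(a_i)g(a_i)=0$ for all $f,g$ of degree $\le k-1$, and by bilinearity it is enough to test $f=x^a$, $g=x^b$. This gives $\sum_i w_i a_i^{a+b}=0$; as $a+b$ runs over $0,1,\dots,2k-2=n-2$ the condition becomes the homogeneous system $P_l:=\sum_{i=1}^n w_i a_i^l=0$ for $l=0,1,\dots,n-2$.

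Next I would record the key identity: interpolating $x^l$ at the nodes $a_1,\dots,a_n$ and comparing the coefficient of $x^{n-1}$ yields $\sum_{i=1}^n a_i^l/\Delta_{\cS}(a_i)=0$ for $0\le l\le n-2$ and $=1$ for $l=n-1$. Consequently the solution space of the $(n-1)\times n$ system is the line spanned by $(\Delta_{\cS}(a_1)^{-1},\dots,\Delta_{\cS}(a_n)^{-1})$, so a self-orthogonal $v$ exists iff $v_i^2 = c\,\Delta_{\cS}(a_i)^{-1}$ for a single $c\in\F_q^{\ast}$. Such $v_i$ exist iff each $c\,\Delta_{\cS}(a_i)^{-1}$ is a square, i.e. $\eta_q(\Delta_{\cS}(a_i))=\eta_q(c)$ for all $i$; choosing $c$ to be a square or a nonsquare, this is solvable precisely when all the $\eta_q(\Delta_{\cS}(a))$ coincide, which proves Part (1).

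For Part (2) I would run the same reduction: the length-$(n+1)$ code has dimension $k$, so self-duality forces $k=(n+1)/2$ and again amounts to self-orthogonality, now for the bilinear form $B(f,g)=\sum_i w_i f(a_i)g(a_i)+f_{k-1}g_{k-1}$. Testing on monomials and using that the coefficient of $x^{n-1}$ in $fg$ equals $f_{k-1}g_{k-1}$ (degree $2(k-1)=n-1$ is reached only by multiplying the two leading terms) gives $B(x^a,x^b)=P_{a+b}+\delta_{a,k-1}\delta_{b,k-1}$. Hence self-orthogonality is equivalent to $P_l=0$ for $0\le l\le n-2$ \emph{together with} the inhomogeneous relation $P_{n-1}=-1$. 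The homogeneous part again forces $w_i=c\,\Delta_{\cS}(a_i)^{-1}$, and the identity $\sum_i a_i^{n-1}/\Delta_{\cS}(a_i)=1$ turns $P_{n-1}=-1$ into $c=-1$. So a valid $v$ exists iff $-\Delta_{\cS}(a_i)^{-1}$ is a square for every $i$, equivalently $\eta_q(-\Delta_{\cS}(a))=1$ for all $a\in\cS$.

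I expect the main obstacle to be Part (2): one must correctly account for the extended coordinate in the bilinear form, recognising that $f_{k-1}g_{k-1}$ coincides with the top coefficient of $fg$ and therefore merges with $P_{n-1}$ into a single inhomogeneous equation. It is exactly this equation, pinning the scalar to $c=-1$ rather than leaving it free as in Part (1), that produces the sign twist $\eta_q(-\Delta_{\cS}(a))$ in the final condition.
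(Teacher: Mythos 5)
Your proof is correct in both parts: the reduction of self-duality to self-orthogonality at dimension $n/2$ (resp.\ $(n+1)/2$), the monomial test yielding the system $P_l=0$ for $0\le l\le n-2$, the Lagrange-interpolation identity $\sum_{i}a_i^l\Delta_{\cS}(a_i)^{-1}=0$ (resp.\ $=1$ at $l=n-1$) identifying the solution set, and the extra equation $P_{n-1}=-1$ that pins $c=-1$ and produces the sign twist in Part (2) are all sound; the only step left implicit is that the $(n-1)\times n$ system has kernel of dimension exactly one, which follows since any $n-1$ of its columns form an invertible Vandermonde matrix. Note that the paper itself does not prove Theorem \ref{thm-main} but quotes it from \cite{JX} and \cite{Yan}, and your argument is essentially the standard proof given in those references, which phrase the same computation through the known description of the dual of a (extended) GRS code.
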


\begin{definition}
Let $\Sigma(q)$ be the set of all even number $n \geq 2$ such that there exists MDS self-dual code over $\F_q$
with length $n$. Let $\Sigma(g,q)$ and $\Sigma(eg,q)$ be the set of all even number $n\geq 2$ such that there
exists MDS self-dual code over $\F_q$ with length $n$ constructed by generalized RS code (Theorem \ref{thm-main} (1)) and extended generalized RS code (Theorem \ref{thm-main} (2)) respectively. Namely,
$$
\Sigma(g,q)=\left \{ n:      \begin{array}{ll}
2 \mid n \geq 2, \mbox{there exists a subset } \ \cS \ \mbox{of} \ \F_q, |\cS|=n,  \\
\mbox{such that all} \ \eta_q(\Delta_{\cS}(a)) \  (a \in \cS)\ \mbox{are the same}.
\end{array} \right\}.
$$

$$
\Sigma(eg,q)=\left\{ n:      \begin{array}{ll}
2 \mid n \geq 2, \mbox{there exists a subset } \ \cS \ \mbox{of} \ \F_q, |\cS|=n-1,  \\
\mbox{such that all} \ \eta_q(-\Delta_{\cS}(a))=1 \ \mbox{for all } \ a \in \cS.
\end{array}\right\}.
$$
\end{definition}
We have $\Sigma(g,q)  \cup \Sigma(eg,q) \subseteq \Sigma(q).$ As a direct consequence of Theorem \ref{thm-main}, we have
\begin{corollary} (\cite{JX})\label{thm-coro}
Suppose that $r=p^m$ and $q=r^2,$ where $p$ is an odd prime number and $m \geq 1.$ Then for any even number $n,
 n \in \Sigma(g,q)$ if $ 2 \leq n \leq r-1$ and $n \in \Sigma(eg,q)$ if $ 4 \leq n \leq r+1.$
 Therefore for all even number $n, 2 \leq n \leq r+1, n \in \Sigma(q).$
\end{corollary}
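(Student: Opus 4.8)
The plan is to exploit the subfield structure $\F_r \subseteq \F_q$ together with Theorem~\ref{thm-main}, reducing everything to a single observation about the quadratic character $\eta_q$ restricted to $\F_r^{\ast}$.

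The key step I would isolate first is the claim that $\eta_q(b)=1$ for every $b \in \F_r^{\ast}$; that is, every nonzero element of the subfield $\F_r$ is a square in $\F_q^{\ast}$. This follows from a short order computation: for $b \in \F_r^{\ast}$ we have $b^{r-1}=1$, and since $p$ is odd the integer $r=p^m$ is odd, so $(r+1)/2 \in \Z$. Hence $b^{(q-1)/2}=b^{(r^2-1)/2}=(b^{r-1})^{(r+1)/2}=1$, which is exactly the criterion for $b$ to be a square in the cyclic group $\F_q^{\ast}$. This is the heart of the argument; once it is in hand the two parts are essentially immediate, and there is no genuine obstacle beyond the parity bookkeeping described below.

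For part (1), I would choose $\cS$ to be any subset of the subfield $\F_r$ with $|\cS|=n$. This is possible precisely when $n \leq r$, and since $r$ is odd and $n$ is even this is the same as $n \leq r-1$. For such an $\cS$ all differences $a_i-a_j$ lie in $\F_r^{\ast}$, so $\Delta_{\cS}(a) \in \F_r^{\ast}$ for every $a \in \cS$, whence by the key step $\eta_q(\Delta_{\cS}(a))=1$ for all $a$. In particular these values are all the same, and Theorem~\ref{thm-main}(1) gives $n \in \Sigma(g,q)$. For part (2) I would again take $\cS \subseteq \F_r$, now with $|\cS|=n-1$, which is possible exactly when $n-1 \leq r$, i.e.\ $n \leq r+1$. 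Since $-1 \in \F_r^{\ast}$ and $\Delta_{\cS}(a) \in \F_r^{\ast}$, we get $-\Delta_{\cS}(a) \in \F_r^{\ast}$, so $\eta_q(-\Delta_{\cS}(a))=1$ for all $a \in \cS$, and Theorem~\ref{thm-main}(2) yields $n \in \Sigma(eg,q)$.

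Finally I would assemble the last assertion from the containment $\Sigma(g,q) \cup \Sigma(eg,q) \subseteq \Sigma(q)$ recorded above. Part (1) covers the even $n$ with $2 \leq n \leq r-1$ and part (2) covers the even $n$ with $4 \leq n \leq r+1$; because $r$ is odd, $r+1$ is the largest even integer in the target range, and the two sub-ranges together leave no gap, so they exhaust every even $n$ with $2 \leq n \leq r+1$. The only point requiring care is the verification that the bound $n \leq r$ (resp.\ $n-1 \leq r$) matches the stated bounds $r-1$ and $r+1$ under the evenness of $n$ and the oddness of $r$, and that the value $n=r+1$ is supplied by part (2) while $n=2$ is supplied by part (1).
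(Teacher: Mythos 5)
Your proposal is correct and follows essentially the same route as the paper: take $\cS$ inside the subfield $\F_r$ (of size $n$ for the GRS case, $n-1$ for the EGRS case), observe that every $\Delta_{\cS}(a)$, and $-\Delta_{\cS}(a)$, lies in $\F_r^{\ast}$ and hence is a square in $\F_q^{\ast}$, and apply Theorem~\ref{thm-main}. The only cosmetic difference is that you verify the key fact $\eta_q(b)=1$ for $b\in\F_r^{\ast}$ explicitly via Euler's criterion, where the paper simply cites that $\F_q$ is the quadratic extension of $\F_r$.
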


\begin{proof}
Suppose that $2 \mid n \leq r-1.$ Let $\cS$ be any subset of $\F_r$ with size $n.$
Then for each $a \in \cS,\Delta_{\cS}(a)=\prod\limits_{\scriptstyle b \in \cS
\atop \scriptstyle b \neq a}(a-b) \in \F^{\ast}_r.$ Since $\F_q$ is the quadratic extension of $\F_r,$
each non-zero element of $\F_r$ is a square in $\F^{\ast}_q.$ Therefore $\eta_q(\Delta_{\cS}(a))=1$
for all $a \in \cS.$ By Theorem \ref{thm-main} (1), we get $n \in \Sigma(g,q).$ Similarly we get that $n \in \Sigma(eg,q)$
for even number $n, 4  \leq n \leq r+1$ by Theorem \ref{thm-main} (2) and taking a subset $\cS$ of $\F_r$
with size $n-1.$
\end{proof}

\section{Binomials and Linearized Polynomials}\label{sec-three}
In general case, we should choose a subset $\cS$ of $\F_q$ and compute $\Delta_{\cS}(a) \ (a \in \cS)$
carefully to verify the conditions stated in Theorem \ref{thm-main}. The following simple fact can be used to
simplify the computation of $\Delta_{\cS}(a)$ in many cases. For a polynomial
$f(x)=\sum a_i x^i \in \F_q[x],$ we denote the derivative of $f(x)$ by $f^{\prime}(x)=\sum a_i i x^{i-1} \in \F_q[x].$

\begin{lemma}\label{thm-three}
(1) Let  $\cS=\{a_1, a_2, \ldots, a_n\}$ be a subset of $\F_q, f_{\cS}(x)=\prod\limits_{a \in \cS}(x-a).$
Then for any $ a \in \mathcal{S},\Delta_{\mathcal{S}}(a)=f^{\prime}_{\cS}(a).$

(2) Let $\cS_1$ and  $\cS_2$ be disjoint subsets of $\F_q, \cS=\cS_1 \bigcup \cS_2,$
$f_{\cS_i}(x)=\prod\limits_{a \in \cS_i}(x-a) \ (i=1,2).$ Then for $b \in \cS,$

$$
\Delta_{\cS}(b) =  \left \{
\begin{array}{ll}
\Delta_{\cS_1}(b)f_{\cS_2}(b), & \mbox{if} \ b \in \cS_1, \\
\Delta_{\cS_2}(b)f_{\cS_1}(b), & \mbox{if} \ b \in \cS_2.
\end{array}
\right.
$$

(3) Let $\mathcal{M} \subseteq \F_q, g(x)$ be a monic polynomial of degree $d$ in $\F_q[x].$
Assume that for each $a \in \mathcal{M}, g(x)=a$ has exactly $d$ distinct solutions $x$ in $\F_q.$
Namely, the size of $g^{-1}(a)= \{ b \in \F_q : g(b)=a\}$ is $d$ for all $a \in \mathcal{M}.$
Let $\cS=\bigcup\limits_{a \in \mathcal{M}}g^{-1}(a).$ Then for $b \in \mathcal{S},$
we have $\Delta_{\cS}(b)=\Delta_{\mathcal{M}}(a)g^{\prime}(b)$ where $a=g(b) \in \mathcal{M}.$
\end{lemma}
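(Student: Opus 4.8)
For part (1), I would start directly from the product rule. Writing $f_{\cS}(x)=\prod_{j=1}^{n}(x-a_j)$, the derivative is a sum of $n$ terms, where the $i$-th term omits the factor $(x-a_i)$. When I evaluate $f'_{\cS}(x)$ at $x=a_i$, every term except the $i$-th contains a factor $(a_i-a_i)=0$ and therefore vanishes; the surviving term is exactly $\prod_{j\neq i}(a_i-a_j)=\Delta_{\cS}(a_i)$. This is the standard formula for the derivative of a product evaluated at a root, and I expect it to be a one-line computation.

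For part (2), the key observation is the factorization $f_{\cS}(x)=f_{\cS_1}(x)f_{\cS_2}(x)$, which holds because $\cS=\cS_1\cup\cS_2$ is a disjoint union. I would apply part (1) to rewrite $\Delta_{\cS}(b)$ as $f'_{\cS}(b)$, then differentiate the product to get $f'_{\cS}(x)=f'_{\cS_1}(x)f_{\cS_2}(x)+f_{\cS_1}(x)f'_{\cS_2}(x)$. Now suppose $b\in\cS_1$. Then $b$ is a root of $f_{\cS_1}$, so the second summand vanishes at $x=b$, leaving $f'_{\cS}(b)=f'_{\cS_1}(b)f_{\cS_2}(b)$. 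Applying part (1) again to $\cS_1$ gives $f'_{\cS_1}(b)=\Delta_{\cS_1}(b)$, which yields the first case; the case $b\in\cS_2$ is symmetric.

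Part (3) is where the real content lies, and I expect the bookkeeping of the fibers to be the main obstacle. The setup is that $g^{-1}(a)$ has exactly $d$ elements for each $a\in\cM$, and $\cS$ is the disjoint union of these fibers (disjointness because distinct values $a$ give disjoint preimages). The natural strategy is to compare $f_{\cS}(x)$ with the composite polynomial $\prod_{a\in\cM}\bigl(g(x)-a\bigr)$. Since $g$ is monic of degree $d$ and each $g(x)-a$ splits completely into the $d$ linear factors $\prod_{b\in g^{-1}(a)}(x-b)$ by hypothesis, I would argue that $f_{\cS}(x)=\prod_{a\in\cM}\bigl(g(x)-a\bigr)=f_{\cM}(g(x))$. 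Fixing $b\in\cS$ with $a=g(b)$, I differentiate via the chain rule: $f'_{\cS}(x)=f'_{\cM}(g(x))\,g'(x)$, so $f'_{\cS}(b)=f'_{\cM}(g(b))\,g'(b)=f'_{\cM}(a)\,g'(b)$. Applying part (1) to both $\cS$ and $\cM$ converts $f'_{\cS}(b)$ into $\Delta_{\cS}(b)$ and $f'_{\cM}(a)$ into $\Delta_{\cM}(a)$, giving $\Delta_{\cS}(b)=\Delta_{\cM}(a)\,g'(b)$, as desired. The one point requiring care is verifying the factorization $f_{\cS}=f_{\cM}\circ g$ cleanly: both sides are monic of degree $d\lvert\cM\rvert=\lvert\cS\rvert$, and the complete-splitting hypothesis guarantees they share the same roots with the same multiplicities, so they coincide.
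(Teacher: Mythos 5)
Your proposal is correct and follows essentially the same route as the paper's own proof in all three parts: the product-rule evaluation at a root for (1), the factorization $f_{\cS}=f_{\cS_1}f_{\cS_2}$ for (2), and the identity $f_{\cS}(x)=f_{\cM}(g(x))$ combined with the chain rule for (3). The only cosmetic difference is that you justify the factorization in (3) by comparing degrees and roots, whereas the paper writes $g(x)-a=\prod_{b\in g^{-1}(a)}(x-b)$ directly from the hypothesis that each fiber has exactly $d$ distinct elements; these amount to the same observation.
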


\begin{proof}
(1) From $ f^{\prime}_{\cS}(x)=(\prod^n\limits_{\lambda=1}(x-a_{\lambda}))^{\prime}
=\sum^n\limits_{l=1}(\prod^n\limits_{\scriptstyle \lambda=1
\atop \scriptstyle \lambda \neq l}(x-a_{\lambda})),$ we get, for $ 1 \leq i \leq n,$
$$
f^{\prime}_{\cS}(a_i)=\sum^n_{l=1}(\prod^n\limits_{\scriptstyle \lambda=1
\atop \scriptstyle \lambda \neq l}(a_i -a_{\lambda}))=\prod^n\limits_{\scriptstyle \lambda=1
\atop \scriptstyle \lambda \neq i}(a_i -a_{\lambda})=\Delta_{\cS}(a_i).
$$

(2) From $f_{\cS}(x)=\prod\limits_{a \in \cS}(x-a)=f_{\cS_1}(x)f_{\cS_2}(x),$
we get $$f^{\prime}_{\cS}(x)=f^{\prime}_{\cS_1}(x)f_{\cS_2}(x)+f_{\cS_1}(x)f^{\prime}_{\cS_2}(x).$$
For $b \in \cS_1, f_{\cS_1}(b)=0$
and then
$$\Delta_{\cS}(b)=f^{\prime}_{\cS}(b)=f^{\prime}_{\cS_1}(b)f_{\cS_2}(b)=\Delta_{\cS_1}(b)f_{\cS_2}(b).$$
Similarly, for $b \in \cS_2, \Delta_{\cS}(b)=\Delta_{\cS_2}(b)f_{\cS_1}(b).$

(3) For $a \in \mathcal{M}, g^{-1}(a)$ is the set of zeros of $g(x)=a.$ Namely, $g(x)-a=\prod\limits_{b \in g^{-1}(a)}(x-b).$
We get
$$
f_{\cS}(x)=\prod_{a \in \mathcal{M}}\prod_{b \in g^{-1}(a)}(x-b)=\prod_{a \in \mathcal{M}}(g(x)-a)=f_{\mathcal{M}}(g(x)),
$$
where $f_{\mathcal{M}}(x)=\prod\limits_{a \in \mathcal{M}}(x-a).$ Then from $f^{\prime}_{\cS}(x)=f^{\prime}_{\mathcal{M}}(g(x))g^{\prime}(x),$
we get, for $ b \in \cS, a=g(b) \in \mathcal{M},$
$$
\Delta_{\cS}(b)=f^{\prime}_{\cS}(b)=f^{\prime}_{\mathcal{M}}(g(b))g^{\prime}(b) =f^{\prime}_{\mathcal{M}}(a)g^{\prime}(b)=\Delta_{\cM}(a)g^{\prime}(b).
$$
\end{proof}

Let $\F^{\ast}_q=\langle \theta \rangle$. In many known results the set $\cS$ is chosen as a subgroup $ \gC=\langle \theta^e \rangle$
of $\F^{\ast}_q$ where $q-1=ef,$ or a coset $\cS=\theta^{\lambda} \gC$ of $\gC$ in $\F^{\ast}_q$
or an union of coset $\cS=\bigcup^t\limits_{i=1}(\theta^{\lambda_i} \gC) \ (0 \leq \lambda_1 < \lambda_2 <\cdots < \lambda_t  \leq e-1).$
For $\cS=\theta^{\lambda}C, f_{\cS}(x)=\prod^{f-1}\limits_{j=0}(x-\theta^{\lambda+ej})=x^f -\theta^{\lambda f}$
is a binomial and $f^{\prime}_{\cS}(x)=f x^{f-1}.$ Another candidate of $\cS$ is $\F_r$-subspace $\mathrm{V}$
of $\F_q,$ where $\F_r$ is a subfield of $\F_q \ (r=p^s, q=r^l=p^{sl}),$ a coset $\mathrm{V}+c \ (c \in \F_q)$ or an union of cosets.
It is well-known that
$$
f_{\mathrm{V}}(x)=\prod_{a \in \mathrm{V}}(x-a)=x^{r^t}+c_1x^{r^{t-1}}+\cdots+c_{t-1}x^r+c_tx \in \F_q[x]
$$
is a $\F_r$-linearized polynomial in $\F_q[x]$ and $ f^{\prime}_{\mathrm{V}}(x)=c_t$ where $t=\dim_{\F_r}\mathrm{V}.$

\begin{definition}
Let $\F_r$ be a subfield of $\F_q, r=p^s, q=r^l.$ A $\F_r$-linearized polynomial in $\F_q[x]$ has the following form

\begin{equation}
L(x)=c_0 x^{r^t}+c_1 x^{r^{t-1}}+ \cdots + c_{t-1}x^r+c_t x \in \F_q[x] \quad (c_0 \neq 0).
\end{equation}
\end{definition}

Let $L(x)$ be a $\F_r$-linearized polynomial in $\F_q[x]$. The mapping
$$
\varphi_{L}: \F_q \longrightarrow \F_q, \alpha \longmapsto L(\alpha)
$$
is $\F_r$-linear. Namely, for $a,b \in \F_r, \alpha, \beta \in \F_q,$
$$ L(a \alpha +b \beta)=a L(\alpha)+b L(\beta).$$
Thus the kernel and image of $\varphi_{L}$
$$ Ker(\varphi_{L})=\{c \in \F_q: L(c)=0\}, \ \ Im(\varphi_{L})=\{ L(c): c \in \F_q\} $$
are $\F_r$-subspaces of $\F_q, \dim_{\F_r} Ker(\varphi_{L}) +\dim_{\F_r} Im(\varphi_{L})=\dim_{\F_r}\F_q=l.$

For more facts on linearized polynomials we refer to the book \cite{LN}, Section 3.4.

\section{Review on Some Known Results}\label{sec-three}

In this section, we select few known results on MDS self-dual codes constructed via GRS codes and EGRS codes.
We present simple proofs with the approach illustrated in Section II. For more complete list of known
MDS self-dual codes we refer to the table in \cite{FLLL} and \cite{Yan}.

\begin{theorem}(\cite{FLLL})\label{thm-exam1}
Let $r=p^m \ (p \geq 3), q=r^2, 1 \leq l \leq m$ and $d=\gcd(l,m).$
Then for each $k, 1 \leq k \leq p^d, kp^l \in \Sigma(g,q)$ if $k$ is even
and $kp^l +1 \in \Sigma(eg,q)$ if $k$ is odd.
\end{theorem}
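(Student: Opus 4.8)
The plan is to handle both parities in one stroke by producing, for each admissible $k$, a set $\cS\subseteq\F_q$ with $|\cS|=kp^l$ all of whose values $\Delta_\cS(b)$ lie in the subfield $\F_r^\ast$. The reason for aiming at $\F_r^\ast$ is the elementary fact that, when $q=r^2$, every nonzero element of $\F_r$ is a square in $\F_q^\ast$: for $c\in\F_r^\ast$ one has $c^{(q-1)/2}=(c^{\,r-1})^{(r+1)/2}=1$, using $c^{r-1}=1$ and that $(r+1)/2\in\Z$ since $r$ is odd. Hence $\eta_q(\Delta_\cS(b))=1$ and, as $-1\in\F_r$, also $\eta_q(-\Delta_\cS(b))=1$, for every $b\in\cS$. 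Feeding this into Theorem \ref{thm-main}: if $k$ is even then $kp^l$ is even and the equality of all $\eta_q(\Delta_\cS(b))$ gives $kp^l\in\Sigma(g,q)$ by part (1); if $k$ is odd then $kp^l$ is odd, so $n:=kp^l+1$ is even, the set $\cS$ has size $n-1$, and $\eta_q(-\Delta_\cS(b))=1$ gives $kp^l+1\in\Sigma(eg,q)$ by part (2). Everything therefore reduces to constructing $\cS$.

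To build $\cS$ I will apply Lemma \ref{thm-three}(3) with a linearized polynomial $g=f_V$, and the first task is to pick the additive structure $V$ correctly. Let $\tau\colon x\mapsto x^r$ be the Frobenius of $\F_q$ over $\F_r$; it is an involution, and since $p$ is odd its $(-1)$-eigenspace $\F_q^{-}=\{x\in\F_q:\tau(x)=-x\}$ is an $\F_r$-line, so $\dim_{\F_p}\F_q^{-}=m$. Because $l\le m$, I may choose an $\F_p$-subspace $V\subseteq\F_q^{-}$ with $\dim_{\F_p}V=l$, so $|V|=p^l$. This choice has three features I will verify: (i) $V$ is $\tau$-stable, since $\tau$ acts on it by $-1$ and $-V=V$, whence the coefficients of $f_V(x)=\prod_{a\in V}(x-a)$ are $\tau$-invariant and $f_V\in\F_r[x]$; (ii) $f_V$ is a separable additive ($p$-)polynomial, so $f'_V$ is the constant equal to its linear coefficient, which is nonzero and, being a coefficient of $f_V$, lies in $\F_r^\ast$; and (iii) $V\cap\F_r=\{0\}$ (an element fixed and negated by $\tau$ vanishes), so the $\F_p$-linear map $f_V|_{\F_r}\colon\F_r\to\F_r$ is injective, hence onto, giving $\F_r\subseteq\mathrm{Im}(f_V)$.

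With $V$ fixed, take any $\cM\subseteq\F_r$ with $|\cM|=k$; this is possible since $k\le p^d\le p^m=|\F_r|$, and by (iii) each $a\in\cM$ has a full fibre $f_V^{-1}(a)$ of size $p^l$. Put $\cS=\bigcup_{a\in\cM}f_V^{-1}(a)$. The fibres are disjoint, so $|\cS|=kp^l$, and Lemma \ref{thm-three}(3) gives, for $b\in\cS$ with $a=f_V(b)\in\cM$,
$$\Delta_\cS(b)=\Delta_\cM(a)\,f'_V(b).$$
Here $\Delta_\cM(a)=\prod_{a'\in\cM,\,a'\ne a}(a-a')\in\F_r^\ast$ because $\cM\subseteq\F_r$, and $f'_V(b)\in\F_r^\ast$ by (ii); thus $\Delta_\cS(b)\in\F_r^\ast$ for every $b$, which is exactly what the reduction in the first paragraph needs.

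The only genuinely delicate points are the two structural verifications in the second paragraph: that $V\subseteq\F_q^{-}$ forces $f_V\in\F_r[x]$ (invariance of the coefficients under $\tau$) and that $V\cap\F_r=\{0\}$ yields $\F_r\subseteq\mathrm{Im}(f_V)$; the rest is bookkeeping about fibres and characters. I also observe that the argument uses the hypothesis on $k$ only through $k\le|\F_r|=p^m$, so the same construction in fact produces $kp^l\in\Sigma(g,q)$ and $kp^l+1\in\Sigma(eg,q)$ for every $k\le p^m$, the stated range $k\le p^d$ being a convenient special case.
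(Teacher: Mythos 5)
Your proof is correct, and it takes a genuinely different route from the paper's. Both arguments feed a linearized polynomial into Lemma \ref{thm-three}(3), but the geometric setup differs. The paper takes $V$ to be a subspace of $\F_r$ over the subfield $\F_{p^d}$, $d=\gcd(l,m)$ (the gcd enters precisely because $|V|=p^l$ must be a power of the subfield size while $V$ sits inside $\F_r$), and forms $\cS=\bigcup_i(b_i\gamma+V)$ with $\gamma\in\F_q\setminus\F_r$ and $b_1,\dots,b_k\in\F_{p^d}$; the $\F_{p^d}$-linearity of $f_V$ is what turns $f_{\cS}$ into a composition $g(f_V(x))$, and this linearity requirement is exactly what forces $k\le p^d$. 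The resulting character values there are $\eta_q(f_V(\gamma))^{k-1}$, constant but possibly $-1$, whence the paper's parity split: for even $k$ one only gets ``all values equal,'' while for odd $k$ they equal $1$. You instead place $V$ inside the $(-1)$-eigenspace of the Frobenius $x\mapsto x^r$, which buys two things simultaneously: $f_V\in\F_r[x]$, so the fibres can be taken over an arbitrary $k$-subset $\cM$ of $\F_r$ rather than of $\F_{p^d}$; and $\Delta_{\cS}(b)=\Delta_{\cM}(a)\,f_V'(b)\in\F_r^{\ast}$, so every character value is $+1$ and both parities are handled uniformly. Your closing remark is also right: your construction needs only $k\le|\F_r|=p^m$, so it proves a strictly stronger statement, the bound $k\le p^d$ in the theorem being an artifact of the paper's construction rather than of the problem. (As a consistency check, the extra cases you gain agree with the paper's other results: for $q=81$, $l=1$, $k=8$, your conclusion $24\in\Sigma(g,81)$ also follows from case $(I_1)$ of Theorem \ref{thm-crit3} with $f=8$, $t=3$, $e=10$.)
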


\begin{proof}
Let $m=dm^{\prime},l=dl^{\prime}, \F=\F_s$ for $s=p^d.$ Then $\F$ is a subfield of $\F_r$
and $\dim_{\F} \F_r = \frac{m}{d}=m^{\prime}.$ From $ 1 \leq l \leq m,$ we know that $  1 \leq l^{\prime} \leq m^{\prime}.$
We take a $\F$-subspace $\mathrm{V}$ of $\F_r$ with $\dim_{\F}\mathrm{V}=l^{\prime}.$ Then
$|\mathrm{V} |=|\F|^{l^{\prime}}=p^l \leq p^m=|\F_r|,$ and
$$
f_{\mathrm{V}}(x)=\prod_{c \in \mathrm{V}}(x-c)= x^{s^{l^{\prime}}}+c_1 x^{s^{l^{\prime}-1}}+ \cdots + c_{l^{\prime}-1}x^s+c_{l^{\prime}} x
$$
is a $\F$-linearized polynomial in $\F_r [x].$ Take $\gamma \in \F_q \backslash \F_r,$ then $f_{\mathrm{V}}(\gamma) \neq 0.$
For any $b \in \F, f_{\mathrm{V}}(b \gamma)=b f_{\mathrm{V}}(\gamma).$
Therefore $b \gamma +\mathrm{V} \ (b \in \F)$ are $p^d$ distinct cosets of $\mathrm{V}$ in $\F_q.$
We take a subset $\mathcal{M}=\{b_1, b_2, \ldots, b_k \}$ of $\F$ with size $k \ (\leq |\F|=p^d).$
Let $H_i=b_i \gamma+\mathrm{V}$ and $\mathcal{S}=\bigcup^k\limits_{i=1}H_i =\{ b_i  \gamma +c:  c \in \mathrm{V},1 \leq i \leq k \}.$
Then $|\mathcal{S}|=k |\mathrm{V}|=kp^l, $ and
\begin{eqnarray*}
f_{\cS}(x) &=& \prod^k_{i=1} \prod_{a \in H_i}(x-a)=\prod^k_{i=1} \prod_{c \in \mathrm{V}}(x-b_i \gamma -c)
=\prod^k_{i=1}f_{\mathrm{V}}(x-b_i \gamma) \\
&=& \prod^k_{i=1}(f_{\mathrm{V}}(x)-b_i f_{\mathrm{V}}(\gamma))=g (f_{\mathrm{V}}(x))
\end{eqnarray*}
where $g(x)=\prod^k\limits_{i=1}(x-b_i f_{\mathrm{V}}(\gamma)).$ Thus for each $a \in \mathcal{S}, a=b_{\lambda} \gamma+c, 1 \leq \lambda \leq k$
and $c \in \mathrm{V},$ we have $f_{\mathrm{V}}(a)=b_{\lambda}f_{\mathrm{V}}(\gamma)$ and by Lemma \ref{thm-three} (3),
$$
\Delta_{\cS}(a)=g^{\prime}(f_{\mathrm{V}}(a))f^{\prime}_{\mathrm{V}}(a)
=\Delta_{\mathcal{M^{\prime}}}(b_{\lambda}f_{\mathrm{V}}(\gamma))c_{l^{\prime}}
$$
where $\mathcal{M^{\prime}}=\{b_{\lambda} f_{\mathrm{V}}(\gamma): 1 \leq \lambda \leq k \}$ and then
$$
\Delta_{\mathcal{M^{\prime}}}(b_{\lambda}f_{\mathrm{V}}(\gamma))
=\prod^k \limits_{\scriptstyle i=1\atop \scriptstyle  i \neq \lambda}(b_{\lambda}f_{\mathrm{V}}(\gamma)-b_{i} f_{\mathrm{V}}(\gamma))
=f_{\mathrm{V}}(\gamma)^{k-1}\Delta_{\mathcal{M}}(b_{\lambda})
$$
where $\mathcal{M}=\{b_1, b_2,\ldots,b_k\}$. Since $b_i \in \F \subseteq \F_r \ (1 \leq i \leq k),$
we have $\Delta_{\mathcal{M}}(b_{\lambda})c_{l^{\prime}} \in \F^{\ast}_{r}$
and $\eta_q(\Delta_{\cS}(a))
=\eta_q(f_{\mathrm{V}}(\gamma)^{k-1}\Delta_{\mathcal{M}}(b_{\lambda})c_{l^{\prime}})
=\eta_q(f_{\mathrm{V}}(\gamma))^{k-1}.$
If $k$ is even, $\eta_q(\Delta_{\cS}(a))=\eta_q(f_{\mathrm{V}}(\gamma))$
is the same for all $a \in \cS.$
From Theorem \ref{thm-main} (1) we get $kp^l=|\cS| \in \Sigma(g,q).$
If $k$ is odd, then $k-1$ is even and then $\eta_q(- \Delta_{\cS}(a))=1$ for all $a \in \cS.$
From Theorem \ref{thm-main} (2), we get $kp^l+1 \in \Sigma(eg,q).$
\end{proof}

\begin{corollary}(\cite{Yan})
Let $q=p^{2m} \ (p \geq 3).$ Then for all even $n, n \equiv 0 \ \mbox{or} \ 1 \ (\bmod ~p^m)$
and $p^m +1 \leq n \leq q+1,$ we have $n \in \Sigma(q).$
\end{corollary}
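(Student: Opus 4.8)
The corollary claims that for $q=p^{2m}$ (so that $r=p^m$ and $q=r^2$), every even $n$ with $n\equiv 0$ or $1\pmod{p^m}$ and $p^m+1\le n\le q+1$ lies in $\Sigma(q)$. The plan is to read this as a direct corollary of Theorem~\ref{thm-exam1}, applied in the special case $l=m$, $d=\gcd(m,m)=m$, so that $p^d=p^m=r$ and the parameter $k$ ranges over $1\le k\le p^m$.

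\textbf{Reduction to Theorem~\ref{thm-exam1}.} First I would specialize Theorem~\ref{thm-exam1} with $l=m$. Then $kp^l=kp^m$, and the theorem gives $kp^m\in\Sigma(g,q)$ when $k$ is even, and $kp^m+1\in\Sigma(eg,q)$ when $k$ is odd, for each $k$ with $1\le k\le p^m$. Since both $\Sigma(g,q)$ and $\Sigma(eg,q)$ are contained in $\Sigma(q)$ (as noted in the excerpt just before Corollary~\ref{thm-coro}), every such $kp^m$ (with $k$ even) and $kp^m+1$ (with $k$ odd) belongs to $\Sigma(q)$. The lengths produced are therefore exactly the numbers of the form $kp^m$ with $k$ even and $kp^m+1$ with $k$ odd, and I must check these coincide with the even $n$ claimed.

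\textbf{Matching the two descriptions of the length set.} The numbers $n\equiv 0\pmod{p^m}$ are precisely $n=kp^m$; among these, $n$ is even iff $kp^m$ is even. The numbers $n\equiv 1\pmod{p^m}$ are precisely $n=kp^m+1$; among these $n$ is even iff $kp^m$ is odd. Since $p$ is odd, $p^m$ is odd, so the parity of $kp^m$ equals the parity of $k$: thus $kp^m$ is even exactly when $k$ is even, and $kp^m+1$ is even exactly when $k$ is odd. This is exactly the parity split in Theorem~\ref{thm-exam1}, so the two families together realize \emph{every even} $n$ with $n\equiv 0$ or $1\pmod{p^m}$, as $k$ runs over $1\le k\le p^m$. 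The last step is the range check: as $k$ runs from $1$ to $p^m=r$, the smallest relevant length is around $p^m$ (with $n=p^m+1$ attained at $k=1$, the odd case), and the largest is $k=p^m$ giving either $p^{2m}=q$ (if $p^m$ even, impossible here since $p^m$ is odd) or $p^{2m}+1=q+1$ (since $p^m$ is odd, $k=p^m$ is odd, yielding $kp^m+1=q+1$). One confirms that the even $n$ in $[p^m+1,\,q+1]$ with $n\equiv 0,1\pmod{p^m}$ are covered by $1\le k\le p^m$.

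\textbf{Where the care is needed.} There is no hard analytic obstacle; the whole content is already in Theorem~\ref{thm-exam1}. The only point requiring attention is the bookkeeping at the endpoints of the range and the verification that the parity condition ``$n$ even'' interacts correctly with the congruence classes $0$ and $1\bmod p^m$, using crucially that $p^m$ is odd. I would present these as a short one-paragraph derivation, explicitly noting that $k=1$ recovers $n=p^m+1$ (lower endpoint) and $k=p^m$ recovers $n=q+1$ (upper endpoint), and that intermediate values of $k$ fill in all even $n$ in the two congruence classes within the stated interval.
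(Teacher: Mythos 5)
Your proposal is correct and is exactly the paper's proof: the paper's entire argument is ``Take $d=m$ in Theorem~\ref{thm-exam1}'' (equivalently $l=m$, forcing $d=\gcd(l,m)=m$), and your parity/range bookkeeping merely spells out the details the paper leaves implicit. No gap and no divergence in method.
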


\begin{proof}
Take $d=m$ in Theorem \ref{thm-exam1}.
\end{proof}

In Theorem \ref{thm-exam1}, $\cS$ is chosen as an union of cosets of a subspace of $\F_q.$
Now we consider $\cS$ being an union of cosets of a (cyclic) subgroup of $\F^{\ast}_q=\langle \theta \rangle.$
Let $q-1=ef,$ then $\gC=\langle \theta^{e} \rangle=\{ \theta^{ej}: 0 \leq j \leq f-1\}$
is a subgroup of $\F^{\ast}_q$ with $| \gC |=f.$ All cosets of $\gC$ in $\F^{\ast}_q$
are the $e$-th cyclotomic classes $D_{\lambda}=\theta^{\lambda}\gC \ (0 \leq \lambda \leq e-1).$

Let $\cS_i=\xi_i \gC \ (1 \leq i \leq t)$ be $t$ distinct cosets of
$\gC$ in $\F^{\ast}_q \ (0 \leq t \leq e-1), \cS=\bigcup^t\limits_{i=1}\cS_i, |\cS|=tf.$
Then
\begin{equation}\label{eqn-len}
f_{\cS}(x)=\prod_{a \in \cS}(x-a)=\prod^t_{i=1}\prod^{f-1}_{j=0}(x-\xi_i \theta^{ej})=\prod^t_{i=1}(x^f-\xi^f_{i})=g(x^f)
\end{equation}
where $g(x)=\prod^t\limits_{i=1}(x-\xi^f_{i})=f_{\cS^{\prime}}(x), \cS^{\prime}=\{\xi^f_{i}: 1 \leq i \leq t\}.$

In \cite{FLLL}, $\{\xi_{i}: 1 \leq i \leq t \}$ is taken as a subset of $\langle \theta^{r-1} \rangle$ for $q=r^2.$
Firstly we should determine that how many elements $\xi_{1},\cdots,\xi_{t}$ in $\langle \theta^{r-1} \rangle$
can be taken such that the cosets $\xi_{i}\gC \ (1 \leq i \leq t)$ are distinct.

\begin{lemma}\label{thm-con}
Let $q=r^2, r=p^m \ (p \geq 3), q-1=ef, \F^{\ast}_q=\langle \theta \rangle,
\gC=\langle \alpha \rangle, \mathcal{M}=\langle \beta \rangle,$
where $\alpha=\theta^e$ and $\beta=\theta^{r-1}.$
Let $\{\beta^{i_1},\beta^{i_2},\ldots,\beta^{i_t}\}$ be a subset of $\mathcal{M}$ where $i_1,i_2,\ldots,i_t$
are distinct module $r+1.$ Then $\beta^{i_{\lambda}}\gC \ (1 \leq \lambda \leq t)$ are distinct cosets
of $\gC$ in $\F^{\ast}_q$ if and only if $i_1,i_2,\ldots,i_t$ are distinct module
$\frac{r+1}{\gcd(r+1,f)}.$ Particularly, if $ 1 \leq t \leq \frac{r+1}{\gcd(r+1,f)},$ there exist $t$ elements
$i_{\lambda} \ (1 \leq \lambda \leq t)$ in  $\gZ_{r+1}=\mathbb{Z} / (r+1) \mathbb{Z}$ such that $\beta^{i_{\lambda}}\gC \ (1 \leq \lambda \leq t)$ are distinct.
\end{lemma}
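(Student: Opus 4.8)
The plan is to translate equality of the cosets $\beta^{i_\lambda}\gC$ into a divisibility condition on the exponents $i_\lambda$, and then to resolve that condition by elementary gcd arithmetic.

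First I would note that, since $\beta$ has order $r+1$ in $\F_q^{\ast}$, the hypothesis that $i_1,\dots,i_t$ are distinct modulo $r+1$ simply says that $\beta^{i_1},\dots,\beta^{i_t}$ are $t$ distinct elements of $\mathcal{M}$. Two of the cosets coincide, $\beta^{i_\lambda}\gC=\beta^{i_\mu}\gC$, precisely when $\beta^{i_\lambda-i_\mu}\in\gC$, so everything reduces to characterizing, for an integer $s$, when $\beta^{s}\in\gC$. Writing $\beta=\theta^{r-1}$ and $\gC=\langle\theta^{e}\rangle$ with $\theta$ of order $q-1=ef$, the element $\theta^{(r-1)s}$ lies in $\langle\theta^{e}\rangle$ exactly when the exponent $(r-1)s$ is a multiple of $e$ modulo $q-1$; since $e\mid q-1$, this is equivalent to the plain integer divisibility $e\mid(r-1)s$.

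Next I would substitute $e=\frac{q-1}{f}=\frac{(r-1)(r+1)}{f}$ and clear the common factor $r-1$, which turns $e\mid(r-1)s$ into $(r+1)\mid sf$. Setting $g=\gcd(r+1,f)$ and using that $\frac{r+1}{g}$ and $\frac{f}{g}$ are coprime, I would conclude that $(r+1)\mid sf$ if and only if $\frac{r+1}{g}\mid s$. Applying this with $s=i_\lambda-i_\mu$ shows that $\beta^{i_\lambda}\gC=\beta^{i_\mu}\gC$ if and only if $i_\lambda\equiv i_\mu \pmod{\frac{r+1}{\gcd(r+1,f)}}$, which is exactly the asserted criterion: the $t$ cosets are distinct iff $i_1,\dots,i_t$ are distinct modulo $\frac{r+1}{\gcd(r+1,f)}$. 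For the final sentence, when $1\le t\le \frac{r+1}{\gcd(r+1,f)}$ I would simply take $i_\lambda=\lambda-1$ for $1\le\lambda\le t$; these are pairwise distinct modulo $\frac{r+1}{\gcd(r+1,f)}$ (hence modulo $r+1$), giving $t$ distinct cosets.

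The only delicate step is the passage from ``divisibility modulo $q-1$'' to integer divisibility together with the cancellation of the factor $r-1$. I must use $e\mid q-1$ to justify that $\theta^{(r-1)s}\in\langle\theta^{e}\rangle$ is equivalent to $e\mid(r-1)s$ as integers, and then check that cancelling $r-1$ in $\frac{(r-1)(r+1)}{f}\mid(r-1)s$ loses no information (it does not, since one may multiply through by $f$ and divide by $r-1$). Everything else is routine arithmetic in the cyclic group $\F_q^{\ast}$.
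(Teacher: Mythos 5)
Your proposal is correct and follows essentially the same route as the paper: both reduce coset equality $\beta^{i}\gC=\beta^{j}\gC$ to the integer divisibility $e \mid (r-1)(i-j)$ and then extract the modulus $\frac{r+1}{\gcd(r+1,f)}$ by elementary gcd arithmetic (the paper via the identity $\frac{e}{\gcd(e,r-1)}=\frac{r+1}{\gcd(r+1,f)}$, you via cancelling $r-1$ and invoking coprimality of $\frac{r+1}{g}$ and $\frac{f}{g}$ --- the same computation organized slightly differently). Your explicit choice $i_\lambda=\lambda-1$ for the final claim is a small addition the paper leaves implicit, but there is no substantive difference in method.
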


\begin{proof}
For $i,j \in \gZ_{r+1},$
\begin{eqnarray*}
\beta^i \gC=\beta^j \gC & \Leftrightarrow & \beta^{i-j}=\theta^{(r-1)(i-j)} \in \gC=\langle \theta ^e \rangle \\
& \Leftrightarrow &  (r-1)(i-j) \equiv 0 \  (\bmod ~e) \Leftrightarrow \frac{e}{\gcd(e,r-1)} \mid  i-j.
\end{eqnarray*}
But $ef=q-1=r^2 -1$ and
$$ \frac{e}{\gcd(e,r-1)}=\frac{r^2 -1}{f} / \gcd(\frac{r^2 -1}{f}, r-1)=\frac{r^2 -1}{\gcd(r^2 -1, f(r-1))}=\frac{r+1}{\gcd(r+1,f)}.$$
Therefore $\beta^i \gC=\beta^j \gC$ if and only if $i  \equiv j \  (\bmod ~\frac{r+1}{\gcd(r+1,f)}).$
\end{proof}

Now suppose that $\{ i_1,i_2,\ldots,i_t\}$ be a subset of $\gZ_{r+1}$ such that $i_1,i_2,\ldots,i_t$
are distinct module $\frac{r+1}{\gcd(r+1,f)}.$
Let $ \mathrm{B}=\{ \beta^{i_{\lambda}} : 1 \leq \lambda \leq t\}, \beta=\theta^{r-1}.$ Then
\begin{equation}\label{eqn-sss}
\cS=\mathrm{B} \gC=\bigcup^t_{\lambda=1}\beta^{i_{\lambda}}\gC,
\end{equation}
is an union of $t$ cosets of $\gC$ in $\F^{\ast}_q$
and $ \mid \cS \mid=\mid \mathrm{B} \mid \mid \gC \mid=tf.$
For each $\gamma=\beta^{i_{\mu}}\alpha^j \in \cS, $ by (\ref{eqn-len}), we get
\begin{eqnarray}\label{eqn-mult}
\Delta_{\cS}(\gamma)&=& f^{\prime}_{\cS}(\gamma)=g^{\prime}(\gamma^f) f \gamma^{f-1} \nonumber \\
&=& f\beta^{i_{\mu}(f-1)} \theta^{je(f-1)}\Delta_{\cS^{\prime}}(\beta^{i_{\mu}f}) \quad \cS^{\prime}=\{\beta^{i_1 f},\cdots, \beta^{i_t f}\}  \nonumber \\
&=& f\beta^{i_{\mu}(f-1)} \theta^{-je}\Delta_{\cS^{\prime}}(\beta^{i_{\mu}f}).
\end{eqnarray}
From $\beta^r=\theta^{r(r-1)}=\theta^{1-r}=\beta^{-1},$ we get
\begin{eqnarray*}
\Delta_{\cS^{\prime}}(\beta^{i_{\mu}f})^r &=& \prod^t \limits_{\scriptstyle \lambda=1 \atop \scriptstyle \lambda \neq \mu}(\beta^{i_{\mu}fr}-\beta^{i_{\lambda}fr})=\prod^t \limits_{\scriptstyle \lambda=1 \atop \scriptstyle \lambda \neq \mu}(\beta^{-i_{\mu}f}-\beta^{-i_{\lambda}f})\\
&=& \prod^t \limits_{\scriptstyle \lambda=1 \atop \scriptstyle \lambda \neq \mu}
\frac{\beta^{i_{\lambda}f}-\beta^{i_{\mu}f}}{\beta^{i_{\mu}f}\beta^{i_{\lambda}f}}
=(-1)^{t-1}\beta^{-i_{\mu}f(t-2)-fI}\Delta_{\cS^{\prime}}(\beta^{i_{\mu}f})
\end{eqnarray*}
where $I=\sum^t\limits_{\lambda=1}i_{\lambda}.$
Therefore $\Delta_{\cS^{\prime}}(\beta^{i_{\mu}f})^{r-1}=\theta^A,$
where $$A=\frac{1}{2}(t-1)(r^2-1)-f(r-1)[i_{\mu}(t-2)+I]$$
and then
\begin{equation}\label{eqn-shor}
\Delta_{\cS^{\prime}}(\beta^{i_{\mu}f})=\theta^{B+(r+1)s}
\end{equation}
where $s \in \mathbb{Z}$ and
\begin{equation}\label{eqn-thr}
B=A/(r-1)=\frac{1}{2}(t-1)(r+1)-f[i_{\mu}(t-2)+I].
\end{equation}
Since $2 \nmid r, f \in \F^{\ast}_r, \eta_q(\beta)=\eta_q(\theta)^{r-1}=1$ and $ \eta_q(\theta)=-1,$
from (\ref{eqn-mult}), \ (\ref{eqn-shor}) and (\ref{eqn-thr}), we get $\eta_q(\Delta_{\cS}(\gamma))=(-1)^d,$
\begin{equation}\label{eqn-ddaa}
d=ej+\frac{1}{2}(t-1)(r+1)+f(i_{\mu}t+I).
\end{equation}
Moreover, let $\widetilde{\cS}=\cS \bigcup \{0 \}.$ Then $| \widetilde{\cS}|=tf+1$
and for $\gamma=\beta^{i_{\mu}}\alpha^j \in \cS, \Delta_{\widetilde{\cS}}(\gamma)=\Delta_{\cS}(\gamma)\gamma.$
By (\ref{eqn-mult}), \ (\ref{eqn-ddaa}) and $\eta_q(\gamma)=(-1)^{ej}$ we get
\begin{equation}\label{eqn-ddd}
\eta_q (\Delta_{\widetilde{\cS}}(\gamma))=(-1)^{\tilde{d}}, \quad \tilde{d}=\frac{1}{2}(t-1)(r+1)+f(i_{\mu}t+I).
\end{equation}
For $ 0 \in \widetilde{\cS},$
$$
\Delta_{\widetilde{\cS}}(0)=(-1)^{tf}\prod^t_{\lambda=1}\prod^{f-1}_{j=0}(\beta^{i_{\lambda}}\alpha^j)
=(-1)^{tf}\beta^{fI}\alpha^{\frac{f(f-1)}{2}}
$$
and
\begin{equation}\label{eqn-etad}
\eta_q (\Delta_{\widetilde{\cS}}(0))=(-1)^{ef(f-1)/2}=1 \quad \mbox{since} \ ef=q-1=r^2 -1 \equiv 0 \  (\bmod ~4).
\end{equation}

With above preparation we introduce the following result which appear in \cite{FLLL} but
with slight different statement.

\begin{theorem}\label{thm-crit3}
Let $r$ be a power of a prime number $p \geq 3, q=r^2, q-1=ef$ and $R=\frac{ r+1 }{\gcd(r+1,f)}.$

(I) Assume that $tf$ is even and $ 1 \leq t \leq R$

$(I_1)$ If $e$ is even, then $tf \in \Sigma(g,q);$

$(I_2)$ $tf+2 \in \Sigma(eg,q)$ if

(A) $2 \mid f$ and $4 \mid (t-1)(r+1);$ or

(B) $2 \nmid f.$

(II) Assume that $tf$ is odd. Then for $ 1 \leq t \leq R/2, tf+1 \in \Sigma(eg,q).$
\end{theorem}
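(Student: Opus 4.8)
The plan is to feed the construction assembled just before the statement into Theorem~\ref{thm-main}. For $1\le t\le R$, Lemma~\ref{thm-con} lets me pick residues $i_1,\dots,i_t\in\gZ_{r+1}$ that are pairwise distinct modulo $R$; I set $\cS=\bigcup_{\lambda=1}^{t}\beta^{i_\lambda}\gC$ (so $|\cS|=tf$) and $\widetilde{\cS}=\cS\cup\{0\}$ (so $|\widetilde{\cS}|=tf+1$). Everything then reduces to reading off the parity of the exponents $d$ and $\tilde d$ supplied by (\ref{eqn-ddaa}) and (\ref{eqn-ddd}), together with (\ref{eqn-etad}); the only freedom I exploit is the choice of the $i_\lambda$, on which these exponents depend only through $i_\mu\bmod 2$ and $I=\sum_\lambda i_\lambda\bmod 2$. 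I also record once that $q=r^2\equiv1\pmod4$, whence $\eta_q(-1)=1$, so the minus sign in Theorem~\ref{thm-main}(2) may be dropped throughout. For $(I_1)$ this is immediate: when $e$ is even the term $ej$ in (\ref{eqn-ddaa}) is even, and when $tf$ is even the term $fi_\mu t=(tf)i_\mu$ is even, so $d\equiv\frac12(t-1)(r+1)+fI\pmod2$ is constant on $\cS$ and Theorem~\ref{thm-main}(1) yields $tf\in\Sigma(g,q)$.

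For $(I_2)$ I work with $\widetilde{\cS}$. By (\ref{eqn-etad}) the value at $0$ is already a square, so it remains to force $\tilde d\equiv0\pmod2$ for every $\gamma\in\cS$, where $\tilde d=\frac12(t-1)(r+1)+f(i_\mu t+I)$. In case (A) the hypothesis $2\mid f$ makes $fi_\mu t$ and $fI$ even, leaving $\tilde d\equiv\frac12(t-1)(r+1)$, which vanishes exactly when $4\mid(t-1)(r+1)$. In case (B), $2\nmid f$ and $tf$ even force $t$ even; then $fi_\mu t$ is even, the $\mu$-dependence disappears, and $\tilde d\equiv\frac{r+1}{2}+I\pmod2$, so I need the $i_\lambda$ with $I\equiv\frac{r+1}{2}\pmod2$. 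For $t<R$ either parity of $I$ is reachable, since if no chosen/unchosen residue of opposite parity existed then all of $0,\dots,R-1$ would share a parity, which is false for $R\ge2$; for $t=R$ one computes $I=\binom{R}{2}\equiv\frac{R}{2}\equiv\frac{r+1}{2}\pmod2$, the last step because $\frac{r+1}{R}=\gcd(r+1,f)$ is odd. In both cases Theorem~\ref{thm-main}(2) gives $tf+2\in\Sigma(eg,q)$.

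For (II), $tf$ odd makes $t,f$ both odd, and then $ef=r^2-1\equiv0\pmod8$ forces $e$ even, so the $j$-term again drops from (\ref{eqn-ddaa}). With $t,f$ odd one has $fi_\mu t\equiv i_\mu$ and $fI\equiv I\pmod2$, while $\frac12(t-1)(r+1)$ is even because $t-1$ is; hence $d\equiv i_\mu+I\pmod2$. Therefore $d$ is even for every $\mu$ precisely when the $i_\lambda$ all have one common parity $\epsilon$ (then $I\equiv t\epsilon\equiv\epsilon$ and $d\equiv\epsilon+\epsilon\equiv0$). This is the decisive point: $t$ residues that are distinct modulo the even number $R$ and of a single parity exist if and only if $t\le R/2$, which is exactly the hypothesis, and Theorem~\ref{thm-main}(2) then delivers $tf+1\in\Sigma(eg,q)$.

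The main obstacle I anticipate is not any hard estimate but the disciplined bookkeeping of the two parities $i_\mu\bmod2$ and $I\bmod2$. One should first confirm that the exponents in (\ref{eqn-ddaa})--(\ref{eqn-ddd}) are genuinely well defined modulo $2$, i.e.\ invariant under the representative shift $i_\lambda\mapsto i_\lambda+R$; this holds because the induced change $fR=\lcm(f,r+1)$ is even. One then has to verify that the residues can actually be chosen to meet the required parity conditions: this is vacuous for $(I_1)$, a single global condition on $I$ for $(I_2)$, and the binding same-parity condition that produces the sharp bound $t\le R/2$ in (II).
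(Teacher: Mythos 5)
Your proposal is correct and takes essentially the same route as the paper's own proof: the same coset construction $\cS=\bigcup_{\lambda}\beta^{i_\lambda}\gC$ (with $0$ adjoined for the extended cases), the same parity bookkeeping through (\ref{eqn-ddaa}), (\ref{eqn-ddd}), (\ref{eqn-etad}), and the same final appeal to Theorem \ref{thm-main}. The only differences are matters of rigor rather than approach: you supply a swap argument for the step the paper dismisses as ``easy to see'' (realizing either parity of $I$ when $t<R$), you check well-definedness of the exponents modulo $2$ under shifts $i_\lambda\mapsto i_\lambda+R$, and at $t=R$ you replace the paper's computation $2\tilde d\equiv R+r+1\equiv 0 \pmod 4$ by the equivalent observation $I\equiv R/2\equiv (r+1)/2 \pmod 2$.
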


\begin{proof}
Let $\cS$ be the set defined by (\ref{eqn-sss}), $\widetilde{\cS}=\cS \bigcup \{0 \}.$

(I) If $2 \mid tf,$ by (\ref{eqn-ddaa}) we know that for
$\gamma=\beta^{i_{\mu}}\alpha^j \in \cS, \eta_q(\Delta_{\cS}(\gamma))=(-1)^d,$ where
$$
d=ej+f t i_{\mu}+f I+\frac{1}{2}(t-1)(r+1)  \equiv e j+fI +\frac{1}{2}(t-1)(r+1) \  (\bmod ~2).
$$

$(I_1)$ If $2 \mid e,$ then
$d \equiv fI +\frac{1}{2}(t-1)(r+1) \  (\bmod ~2)$ which is independent of $i_{\mu}$ and $j.$
Namely, all $\eta_q(\Delta_{\cS}(\gamma)) \ (\gamma \in \cS)$ are the same. By Theorem \ref{thm-main} (1)
we get $tf \in \Sigma(g,q).$

$(I_2)$ From (\ref{eqn-etad}) we get $\eta_q(-\Delta_{\widetilde{\cS}}(0))=1.$ From (\ref{eqn-ddd}) we know that for
$\gamma=\beta^{i_{\mu}}\alpha^j \in \cS, \eta_q(-\Delta_{\widetilde{\cS}}(\gamma))=(-1)^{\tilde{d}} \ (\gamma \in \cS)$ where
$\tilde{d} \equiv fI +\frac{1}{2}(t-1)(r+1) \  (\bmod ~2 ).$
If $2 \mid f$ and $4 \mid (t-1)(r+1),$ then $2 \mid \tilde{d}$ and $\eta_q(-\Delta_{\widetilde{\cS}}(\gamma))=1$
for all $\gamma \in \cS.$ By Theorem \ref{thm-main} (2) we get $tf+2=| \widetilde{\cS}|+1 \in \Sigma(eg,q.)$
If $2 \nmid f, \tilde{d} \equiv I +\frac{1}{2}(t-1)(r+1) \  (\bmod ~2 ).$ When $1 \leq t \leq R-1,$ it is easy
to see that there exist $0 \leq i_1 < i_2 < \cdots i_t \leq R-1$ such that
 $$I=\sum^t\limits_{\lambda=1}i_{\lambda} \equiv \frac{1}{2}(t-1)(r+1) \  (\bmod 2 ).$$
 Then $\eta_q(-\Delta_{\cS}(\gamma))=1$ for all $\gamma \in \cS.$

 By Theorem \ref{thm-main} (2), we get $tf+2 \in \Sigma(eg,q).$ When $t=R,$
 then $\{i_1, i_2 , \cdots, i_t \}=\{0,1, \cdots, R-1\} (\bmod ~R ), I \equiv \frac{1}{2}R(R-1) \  (\bmod ~2 ),$
and $ \tilde{d} \equiv \frac{1}{2}R(R-1)+(R-1)(r+1) \  (\bmod ~2 ).$ Remark that $R=\frac{r+1}{\gcd(r+1,f)}$
is even for $2 \nmid f.$ Then we get
$$ 2 \tilde{d} \equiv R(R-1)+(R-1)(r+1) \equiv R+r+1 \equiv 0\  (\bmod ~4 ).$$
Therefore $\eta_q(-\Delta_{\widetilde{\cS}}(\gamma))=1$ for all $\gamma \in \cS.$
 By Theorem \ref{thm-main} (2), we get $tf+2 \in \Sigma(eg,q).$

(II) If $2 \mid tf,$ by (\ref{eqn-ddaa}) we get $\eta_q(-\Delta_{\cS}(\gamma))=(-1)^d$
for $\gamma=\beta^{i_{\mu}}\alpha^j \in \cS$ where $ d \equiv i_{\mu}+I \equiv 0 \  (\bmod ~2)$
and $I=\sum^t\limits_{\lambda=1}i_{\lambda}.$ From $2 \nmid f,$ we know that $2 \mid R.$
If $1 \leq t \leq R/2,$ we take $i_{\mu}=2 \mu -1 \ (1 \leq \mu \leq t).$
Then
$$ I=\sum^t\limits_{\mu=1}i_{\mu} \equiv t \equiv 1 \  (\bmod ~2), \quad
 d \equiv i_{\mu}+I \equiv 0 \  (\bmod ~2)$$
which means that $\eta_q(-\Delta_{\cS}(\gamma))=1$ for all $\gamma \in \cS.$
By Theorem \ref{thm-main} (2), we get $tf+1=1+|\cS| \in \Sigma(eg,q).$
\end{proof}

\begin{theorem}(\cite{FLLL})\label{thm-FLL}
Let $r$ be a power of a prime number $p \geq 3, q=r^2, q-1=ef, 2 \mid s \mid f, 2s \mid r+1$
and $D=\frac{s(r-1)}{\gcd (s(r-1),f)}.$ Then for any $t, 1 \leq t \leq D, tf+2 \in \Sigma(eg,q).$
Moreover, if $2 \mid e,$ then $tf \in \Sigma(g,q).$
\end{theorem}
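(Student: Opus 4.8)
The plan is to realize both conclusions with a single set $\cS$ that is a union of $t$ cosets of $\gC=\langle\alpha\rangle$, $\alpha=\theta^e$, chosen so that the contracted set $\cS^{\prime}=\{\xi_\lambda^f:1\le\lambda\le t\}$ of (\ref{eqn-len}) lies entirely inside $\F_r^{\ast}$. Writing $\cS=\bigcup_{\lambda=1}^t\xi_\lambda\gC$, equation (\ref{eqn-len}) gives $f_{\cS}(x)=f_{\cS^{\prime}}(x^f)$, and differentiating together with Lemma \ref{thm-three}(1) yields, for $\gamma\in\xi_\mu\gC$,
$$\Delta_{\cS}(\gamma)=f\,\gamma^{f-1}\,\Delta_{\cS^{\prime}}(\xi_\mu^f).$$
Since $s$ is even and $s\mid f$ the integer $f$ is even; since $q=r^2\equiv1\pmod 8$ we have $\eta_q(-1)=1$; and $f$ is prime to $p$, so $\eta_q(f)=1$. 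For $\widetilde{\cS}=\cS\cup\{0\}$ we have $\Delta_{\widetilde{\cS}}(\gamma)=\gamma\,\Delta_{\cS}(\gamma)=f\,\gamma^{f}\,\Delta_{\cS^{\prime}}(\xi_\mu^f)$, and because $f$ is even $\eta_q(\gamma)^{f}=1$; hence the self-duality requirement of Theorem \ref{thm-main}(2) collapses to the single condition
$$\eta_q\big(\Delta_{\cS^{\prime}}(\xi_\mu^f)\big)=1\qquad(1\le\mu\le t),$$
the value at $0$ being supplied by (\ref{eqn-etad}).

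The heart of the argument is the choice of cosets. The $f$-th power map induces a bijection between cosets $\xi\gC$ and elements of the order-$e$ group $\langle\theta^f\rangle$, because $\xi^f={\xi^{\prime}}^{f}\iff\xi\gC=\xi^{\prime}\gC$; so picking $t$ distinct cosets amounts to picking $t$ distinct values $\xi_\lambda^f$ in $\langle\theta^f\rangle$. I would set $\psi=\theta^{(r+1)f/s}$ and take $\xi_\lambda^f=\psi^{\lambda-1}$ for $1\le\lambda\le t$. The hypothesis $s\mid r+1$ (contained in $2s\mid r+1$) makes $(r+1)/s$ an integer, so simultaneously $\psi=(\theta^{r+1})^{f/s}\in\F_r^{\ast}$ and $\psi=(\theta^{f})^{(r+1)/s}\in\langle\theta^f\rangle$; thus the $\xi_\lambda^f$ are genuine distinct coset representatives lying in $\F_r^{\ast}$, and every difference $\xi_\mu^f-\xi_\lambda^f$ lies in $\F_r^{\ast}$. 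As $r$ is odd, $\F_r^{\ast}$ consists of squares of $\F_q$, whence $\Delta_{\cS^{\prime}}(\xi_\mu^f)\in\F_r^{\ast}$ is a square and $\eta_q(\Delta_{\cS^{\prime}}(\xi_\mu^f))=1$. The admissible range for $t$ is exactly $\mathrm{ord}(\psi)$, and a direct gcd computation gives $\mathrm{ord}(\psi)=\frac{r-1}{\gcd(r-1,\,f/s)}=\frac{s(r-1)}{\gcd(s(r-1),\,f)}=D$, using $s\mid f$. This establishes $tf+2\in\Sigma(eg,q)$ for all $1\le t\le D$.

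For the ``moreover'' part I would feed the same set $\cS$ (without the point $0$) into Theorem \ref{thm-main}(1). Here $\eta_q(\Delta_{\cS}(\gamma))=\eta_q(\gamma)^{f-1}\,\eta_q(\Delta_{\cS^{\prime}}(\xi_\mu^f))=\eta_q(\gamma)$, since $f-1$ is odd and the second factor is $1$; and $\eta_q(\gamma)=\eta_q(\xi_\mu)\,\eta_q(\alpha)^{j}=\eta_q(\xi_\mu)$ once $2\mid e$, because $\eta_q(\alpha)=(-1)^{e}$. Taking the root $\xi_\mu=\theta^{(r+1)(\mu-1)/s}$ of $\psi^{\mu-1}$ gives $\eta_q(\xi_\mu)=(-1)^{(r+1)(\mu-1)/s}$, and this is precisely where the full strength $2s\mid r+1$ enters: it forces $(r+1)/s$ to be even, so $\eta_q(\xi_\mu)=1$ for every $\mu$ and all the values $\eta_q(\Delta_{\cS}(\gamma))$ coincide. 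By Theorem \ref{thm-main}(1) this yields $tf\in\Sigma(g,q)$.

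The step I expect to be the real obstacle is the coset bookkeeping of the second paragraph: one must guarantee at once that the chosen representatives give distinct cosets of $\gC$, that $\cS^{\prime}$ falls inside $\F_r^{\ast}$, and that the number of such representatives is exactly $D$ rather than merely bounded by it. This reduces to the number-theoretic identity $\mathrm{ord}\big(\theta^{(r+1)f/s}\big)=D$, and it is here that the three divisibility hypotheses $2\mid s$, $s\mid f$ and $2s\mid r+1$ all play their distinct roles.
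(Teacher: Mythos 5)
Your proposal is correct and takes essentially the same route as the paper's own proof: you choose the same set (a union of $t$ cosets $\beta^{\mu}\gC$ with $\beta=\theta^{(r+1)/s}$, just indexed from $0$ rather than $1$), make the same reduction via (\ref{eqn-len})--(\ref{eqn-mult}) to the condition $\eta_q(\Delta_{\cS^{\prime}}(\xi_\mu^f))=1$, force $\cS^{\prime}\subseteq\F_r^{\ast}$ from $s\mid f$, and use $2s\mid r+1$ exactly where the paper does. Two small remarks: your explicit verification that $\mathrm{ord}\big(\theta^{(r+1)f/s}\big)=D$ is a welcome addition (the paper only says ``similarly as Lemma \ref{thm-con}''), but your appeal to (\ref{eqn-etad}) for the value at $0$ is technically out of context, since that equation was derived for cosets represented by powers of $\theta^{r-1}$; the identical one-line computation goes through for your $\beta$ because $\eta_q(\beta)=1$, as the paper itself redoes in its proof.
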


\begin{proof}
Let $\F^{\ast}_q=\langle \theta \rangle,
\gC=\langle \alpha \rangle, \alpha=\theta^e, \beta=\theta^{\frac{r+1}{s}}.$
Similarly as Lemma \ref{thm-con}, we can prove that for
$ 1 \leq t \leq D, \cS=\bigcup^t\limits_{\lambda=1}\beta^{\lambda}\gC$
is a disjoint union of $t$ cosets of $\gC$ in $\F^{\ast}_q, |\cS|=tf.$
For $\gamma=\beta^{\mu}\alpha^j \in \cS,$ by (\ref{eqn-mult}) we get
\begin{equation}\label{eqn-del}
\Delta_{\cS}(\gamma)=f \beta^{\mu (f-1)} \theta^{-ej} \Delta_{\cS^{\prime}}(\beta^{\mu f})
\end{equation}
where $\cS^{\prime}=\{\beta^{\lambda  f}: 1 \leq \lambda \leq t\}.$ From $s \mid f$
we know that
$$(\beta^{\lambda  f})^r=\theta^{\frac{f}{s}\lambda r(r+1)}
=\theta^{\frac{f}{s}\lambda (r+1)}=\beta^{\lambda f}.$$
Therefore $\beta^{\lambda  f} \in \F^{\ast}_r,\Delta_{\cS^{\prime}}(\beta^{\mu f}) \in \F^{\ast}_r$
and $\eta_q(\Delta_{\cS^{\prime}}(\beta^{\mu f}))=1.$ From $2 \mid \frac{r+1}{s},$ we get $\eta_q(\beta)=1.$
Then by (\ref{eqn-del}),
\begin{equation}\label{eqn-eta}
\eta_q (\Delta_{\cS}(\gamma))=\eta_q(\theta^{ej})=(-1)^{ej}.
\end{equation}
If $e$ is even, then $tf=|\cS| \in \Sigma(g,q)$ by Theorem \ref{thm-main} (1).

On the other hand, for $\widetilde{\cS}=\cS \bigcup \{0 \},$ we have
\begin{eqnarray*}
\Delta_{\widetilde{\cS}}(0)&=&(-1)^{tf}\prod^t_{\lambda=1}\prod^{f-1}_{j=1}(\beta^{\lambda}\alpha^j) \\
\eta_q (\Delta_{\widetilde{\cS}}(0))&=& \eta_q(\alpha^{\frac{f(f-1)}{2}})=\eta_q(\theta^{\frac{(q-1)(f-1)}{2}})=1.
\end{eqnarray*}
Since $ q=r^2 \equiv 1 \  (\bmod ~4).$
For $\gamma=\beta^{\mu}\alpha^j \in \cS, \Delta_{\widetilde{\cS}}(\gamma)=\Delta_{\cS}(\gamma)\gamma$
and $\eta_q(\Delta_{\widetilde{\cS}}(\gamma))=\eta_q(\theta^{ej} \theta^{ej})=1$ for all $\gamma \in \cS.$
By Theorem \ref{thm-main} (2) we get $tf+2 \in \Sigma(eg,q).$
\end{proof}

\section{New Results}\label{sec-four}

In this section we present several new constructions of MDS self-dual codes. Firstly we consider
$\cS$ as a disjoint union of a subset and a subspace of $\F_q.$

\begin{theorem}\label{thm-new1}
Let $ r=p^m  \ (p \geq 3), q=r^2.$ Then for each even number $n, 2 \leq n \leq 2r,$
we have $n \in \Sigma(q).$ Precisely, let $ 0 \leq l \leq r-1,  d \in \{ 0, 1\}.$
Then $l+dr \in \Sigma(g,q)$ if $ 2 \mid l+d,$ and $l+dr+1 \in \Sigma(eg,q)$ if $2 \nmid l+d.$
\end{theorem}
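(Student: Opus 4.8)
The plan is to realize every target length by a set of the shape $\cS=\cS_1\sqcup V$, where $V$ is the subfield $\F_r\subseteq\F_q$ (an $\F_r$-line, so $|V|=r$) and $\cS_1\subseteq\F_q\setminus\F_r$ has size $l$; for $d=0$ one simply drops $V$ and takes $\cS=\cS_1\subseteq\F_r$. Since $f_{\F_r}(x)=x^r-x$, Lemma \ref{thm-three}(1) gives $\Delta_{\F_r}(b)=f_{\F_r}'(b)=-1$, and Lemma \ref{thm-three}(2) splits the discriminant as
\[
\Delta_\cS(b)=-f_{\cS_1}(b)\ (b\in\F_r),\qquad \Delta_\cS(a)=\Delta_{\cS_1}(a)\,(a^r-a)\ (a\in\cS_1).
\]
I would carry three elementary facts throughout: since $q=r^2\equiv1\pmod4$ we have $\eta_q(-1)=1$; since $\F_q/\F_r$ is quadratic every element of $\F_r^\ast$ is a square, so $\eta_q|_{\F_r^\ast}=1$; and $\eta_q(z)=\eta_r(z^{r+1})$. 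From the last identity, a nonzero trace-zero element $z$ (i.e. $z^r=-z$) has $z^{r+1}=-z^2$ with $z^2$ a non-square of $\F_r$, whence $\eta_q(a^r-a)=(-1)^{(r+1)/2}=:\varepsilon$ is a constant independent of $a\notin\F_r$.

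With these, the two displayed quantities simplify: for $a\in\cS_1$ one gets $\eta_q(\Delta_\cS(a))=\varepsilon\,\eta_q(\Delta_{\cS_1}(a))$, and for $b\in\F_r$ one gets $\eta_q(\Delta_\cS(b))=\eta_q(f_{\cS_1}(b))$. The case $d=0$ is then immediate: all discriminants lie in $\F_r^\ast$, so every $\eta_q(\Delta_\cS)=1$, and $\eta_q(-1)=1$ handles the extended code, recovering Corollary \ref{thm-coro}. For $d=1$ I would make the $\F_r$-contribution trivial by choosing $\cS_1$ so that $f_{\cS_1}\in\F_r[x]$, concretely by taking $\cS_1$ Frobenius-stable (a union of conjugate pairs $\{a,a^r\}$), so that $f_{\cS_1}(b)\in\F_r^\ast$ and $\eta_q(\Delta_\cS(b))=1$ on $\F_r$. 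One then only needs $\eta_q(\Delta_{\cS_1}(a))=\varepsilon$ on $\cS_1$; a single conjugate pair already works, since there $\Delta_{\cS_1}(a)=a-a^r$ yields $\eta_q(\Delta_\cS(a))=\eta_q\!\left(-(a^r-a)^2\right)=1$.

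The remaining task, where essentially all the work sits, is to produce for every admissible length a subset $\cS_1$ (equivalently a matched additive configuration such as a box $W\cup(\gamma+W)$ with $W\subseteq\F_r$ and $\gamma$ trace zero) for which the quadratic characters coming from the cross term $f_{\cS_1}$ are constant on $\cS$. Using $\eta_q(u-\gamma)=\eta_r(u^2-\gamma^2)$ for $u\in\F_r$, this product becomes $\prod_{w\in W}\eta_r\!\left((c-w)^2-\gamma^2\right)$, and one needs it independent of $c$ as $c$ runs over the relevant index set. I would attempt this by taking $W$ to be a union of cosets of an $\F_p$-subspace $U$ of $\F_r$: translation invariance of $U$ makes the inner product over each coset depend only on the coset, reducing the requirement to the same type of statement one level down, to be closed by induction together with parity bookkeeping in $r\bmod 4$ and in $l$.

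The hard part will be exactly this constancy analysis, and in particular the generalized-RS subcase of $d=1$, where $2\mid l+d$ forces $l$ odd: then $\cS_1$ cannot be Frobenius-stable, $f_{\cS_1}\notin\F_r[x]$, and the clean cancellation on $\F_r$ is lost. I expect this subcase to demand either peeling off conjugate pairs until a single unpaired trace-zero direction remains, or replacing the rigid $\F_r\sqcup\cS_1$ by the matched box so that the index set over which constancy is required shrinks from all of $\F_r$ to $W$. Checking that some admissible configuration exists for every even $n\le 2r$, and that the resulting constant sign is compatible with the extended-code condition $\eta_q(-\Delta_\cS(a))=1$, is where the computation will concentrate.
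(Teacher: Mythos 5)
Your proposal has a genuine gap, and by your own admission: everything after the single-conjugate-pair computation is a plan, not an argument, and the deferred ``constancy analysis'' is where the entire content of the theorem sits. What you actually prove is the case $d=0$ (which is just Corollary \ref{thm-coro}) and the case $d=1$, $l=2$. Even for even $l\geq 4$, a union of conjugate pairs does not automatically satisfy your requirement $\eta_q(\Delta_{\cS_1}(a))=\varepsilon$: you need the pairs in special position, e.g.\ $\cS_1=\gamma W$ with $\gamma^r=-\gamma$ and $W=-W\subseteq\F^{\ast}_r$, which gives $\Delta_{\cS_1}(\gamma w)=\gamma^{l-1}\Delta_{W}(w)$, hence character $\varepsilon^{l-1}=\varepsilon$, and $f_{\cS_1}(b)=\prod_{\{w,-w\}}(b^2-\gamma^2w^2)\in\F^{\ast}_r$; that choice does close the even-$l$ (extended) half, but it is not in your text. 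The subcase that remains genuinely open is the GRS half of $d=1$, where $2\mid l+d$ forces $l$ odd: you correctly observe that $\cS_1$ cannot then be Frobenius-stable, but your proposed remedies (peeling off pairs, the box $W\cup(\gamma+W)$, induction over $\F_p$-subspaces) are speculation --- no configuration is exhibited and no constancy statement is proved. As submitted, the theorem is not proved.

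You should know, however, that the obstruction you flagged is real, because the paper's own proof breaks at exactly that point. The paper uses the mirror image of your decomposition: $\cS=\cM\cup V$ with $\cM\subseteq\F^{\ast}_r$, $|\cM|=l$, and $V=\{x\in\F_q:\ x+x^r=0\}$ (rescaling by a fixed $\gamma_0\in V\setminus\{0\}$ carries this into your $\F_r\sqcup\cS_1$ with $\cS_1$ confined to one trace-zero line), and it asserts that $\Delta_{\cS}(\gamma)=f_{\cM}(\gamma)\in\F^{\ast}_r$ for every $\gamma\in V$. Since $f_{\cM}(\gamma)^r=f_{\cM}(-\gamma)$, this assertion holds for all $\gamma\in V$ if and only if $\cM=-\cM$, i.e.\ exactly in your symmetric situation, which forces $l$ even; for odd $l$ it fails, and the construction itself fails, not just the proof. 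Concretely, for $r=3$, $\cM=\{1\}$, $V=\{0,\pm i\}$ with $i^2=-1$: $\Delta_{\cS}(i)=(i-1)\cdot i\cdot(-i)=i-1$ with $\eta_9(i-1)=(i-1)^4=-1$, while $\Delta_{\cS}(1)=(1-i)(1+i)=2$ with $\eta_9(2)=1$, so the characters are not constant and the paper's set does not put $4$ in $\Sigma(g,9)$. (The statement survives in this instance: your box $\{0,1,i,1+i\}$ has all four characters equal to $-1$.) So your framework, completed with the symmetric choice above, legitimately proves everything except the odd-$l$, $d=1$ case --- which is precisely the case the paper's argument also fails to prove; settling it requires a configuration that leaves the single-line world, such as your box, together with the constancy proof that you postponed and the paper asserted incorrectly.
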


\begin{proof}
For $d=0, 2 \leq l \leq r-1,$ this is Corollary \ref{thm-coro}. For $d=1,$ consider the trace
$\tr(x)=x+x^r$ from $\F_q$ to $\F_r.$ The set $V$ is comprised of the zeros of $\tr(x)$ is a $\F_r$-subspace
of $\F_q, |V|=r.$ For each $a \in \F_r, \tr(a)=2a$ which implies that $\F_r \bigcap V=\{ 0\}.$
Let $\cM =\{ a_1, a_2,\ldots, a_l\}$ be a subset of
$\F^{\ast}_r, 0 \leq l \leq r-1.$
 Let $\cS=\mathcal{M} \bigcup V,$
then $\mathcal{M} \bigcap V=\emptyset, |\cS|=l+r$ and
$$f_{\cS}(x)=\prod_{\alpha \in \cS}(x-\alpha)=f_{\mathcal{M}}(x)f_{V}(x)$$
where $f_{\mathcal{M}}(x)=\prod\limits_{ a \in \mathcal{M}}(x-a) \in \F_r[x]$ and $f_{V}(x)=\tr(x).$
Thus $f^{\prime}_{\cS}(x)=f^{\prime}_{\mathcal{M}}(x)\tr(x)+f_{\mathcal{M}}(x).$

For $\gamma \in \cM, \Delta_{\cS}(\gamma)=f^{\prime}_{\cS}(\gamma)=\Delta_{\cM}(\gamma)\tr(\gamma) \in \F^{\ast}_r.$
For $\gamma \in V, \Delta_{\cS}(\gamma)=f^{\prime}_{\cS}(\gamma)=f_{\cM}(\gamma) \in \F^{\ast}_r.$ Thus
$\eta_q(f^{\prime}_{\cS}(\gamma))=1$ for all $\gamma \in \cS.$ The conclusion is derived from Theorem \ref{thm-main}.
\end{proof}

For all MDS self-dual codes over $\F_q$ constructed so far, $q$ is a square. Now we show a
``lifting'' result without this restriction.

\begin{theorem}\label{thm-new2}
Let $q=p^m, Q=q^l, l \geq 1$ and $p \geq 3.$

(1) If $n$ is even, $ 2 \leq n \leq q-1$ and $ n \in \Sigma(g,q),$ then $ nq^{l-1} \in \Sigma(g,Q).$

(2) If $n$ is odd, $1 \leq n \leq q$ and $n+1 \in \Sigma(eg,q),$ then $ nq^{l-1}+1 \in \Sigma(eg,Q).$
\end{theorem}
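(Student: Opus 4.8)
The plan is to realize the length multiplier $q^{l-1}$ by pulling back along the trace map and invoking Lemma~\ref{thm-three}(3). Suppose we are given the witnessing set in the small field: a subset $\cM\subseteq\F_q$ with $|\cM|=n$ such that all $\eta_q(\Delta_\cM(a))$ $(a\in\cM)$ are equal in case (1) (witnessing $n\in\Sigma(g,q)$), respectively such that $\eta_q(-\Delta_\cM(a))=1$ for all $a\in\cM$ in case (2) (witnessing $n+1\in\Sigma(eg,q)$). Take $g(x)=\tr_{\F_Q/\F_q}(x)=x+x^q+x^{q^2}+\cdots+x^{q^{l-1}}\in\F_Q[x]$, a monic $\F_q$-linearized polynomial of degree $q^{l-1}$. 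Since the trace $\F_Q\to\F_q$ is surjective and $\F_q$-linear with kernel of $\F_q$-dimension $l-1$, every fiber $g^{-1}(a)$ with $a\in\F_q$ has exactly $q^{l-1}$ elements; in particular the hypothesis of Lemma~\ref{thm-three}(3) holds with this $\cM$. Set $\cS=\bigcup_{a\in\cM}g^{-1}(a)=\tr^{-1}(\cM)\subseteq\F_Q$, so that $|\cS|=n\,q^{l-1}$. The size constraints $n\le q-1$ (resp. $n\le q$) guarantee $|\cS|\le Q$.

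Next I would compute $\Delta_\cS$ on the lifted set. The key simplification is that in characteristic $p$ every term $x^{q^i}$ $(i\ge 1)$ has vanishing derivative, so $g'(x)=1$ identically. Hence Lemma~\ref{thm-three}(3) gives, for each $b\in\cS$ with $a=g(b)=\tr(b)\in\cM$,
$$
\Delta_\cS(b)=\Delta_\cM(a)\,g'(b)=\Delta_\cM(a)\in\F_q^{\ast},
$$
where $\Delta_\cM(a)=\prod_{c\in\cM,\,c\neq a}(a-c)$ is literally the same field element whether read in $\F_q$ or in $\F_Q$. Thus every value $\Delta_\cS(b)$ already lies in the subfield $\F_q^\ast$.

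The remaining point is to compare the two quadratic characters on $\F_q^\ast$. By Euler's criterion, for $c\in\F_q^\ast$ we have $\eta_Q(c)=c^{(Q-1)/2}=\eta_q(c)^N$ with $N=\frac{q^l-1}{q-1}=1+q+\cdots+q^{l-1}$; since $q$ is odd, $N\equiv l\pmod 2$. For part (2) this settles everything uniformly: $\eta_Q(-\Delta_\cS(b))=\eta_q(-\Delta_\cM(a))^N=1^N=1$ for all $b\in\cS$, so Theorem~\ref{thm-main}(2) yields $n\,q^{l-1}+1\in\Sigma(eg,Q)$ (noting $n\,q^{l-1}+1$ is even as $n$ and $q^{l-1}$ are odd). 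For part (1) I would split on the parity of $l$: if $l$ is even then $N$ is even and $\eta_Q(\Delta_\cS(b))=1$ for every $b$, so the values are trivially all equal; if $l$ is odd then $N$ is odd and $\eta_Q(\Delta_\cS(b))=\eta_q(\Delta_\cM(a))$, which is constant on $\cS$ precisely because it is constant on $\cM$ by hypothesis. Either way Theorem~\ref{thm-main}(1) gives $n\,q^{l-1}\in\Sigma(g,Q)$.

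The only genuinely delicate step is this last character comparison, and it is exactly where the two halves behave differently. The EGRS condition only asks that each $\eta_Q(-\Delta_\cS(b))$ equal $+1$, a property stable under $c\mapsto c^N$ for any $N$, so case (2) lifts for all $l$. The GRS condition instead requires that the values be merely \emph{equal}, which is preserved when $\eta_Q$ restricts to $\eta_q$ (odd $l$) and otherwise collapses harmlessly to the constant $+1$ (even $l$) because every element of $\F_q^\ast$ becomes a square in the even-degree extension. I expect the verification that $\tr_{\F_Q/\F_q}$ meets the exact-fiber hypothesis of Lemma~\ref{thm-three}(3) and the bookkeeping of $N\bmod 2$ to be the parts requiring care, while the rest is routine.
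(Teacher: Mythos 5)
Your proof is correct and takes essentially the same route as the paper's: both pull $\cM$ back along the trace map $\tr_{\F_Q/\F_q}$ (whose fibers are cosets of the kernel of size $q^{l-1}$), use that the trace polynomial has derivative $1$ to conclude $\Delta_{\cS}(b)=\Delta_{\cM}(\tr(b))\in\F^{\ast}_q$, and then compare quadratic characters via $\eta_Q(c)=\eta_q(c)^{(Q-1)/(q-1)}$ for $c\in\F^{\ast}_q$. The only cosmetic differences are that you cite Lemma~\ref{thm-three}(3) where the paper recomputes $f_{\cS}(x)=f_{\cM}(\tr(x))$ inline via the coset decomposition, and your parity split on $l$ in part (1) is unnecessary, since any fixed power of a constant value is again constant.
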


\begin{proof}
(1) Suppose that $2 \mid n, 2 \leq n \leq q-1$ and $ n \in \Sigma(g,q).$ Then there exists a
subset $\cM$ of $\F_q, |\cM|=n$ such that $\eta_q(\Delta_{\cM} (a))$ are the same for all $a \in \cM$.
Consider the trace mapping
$$\tr: \F_Q \longrightarrow \F_q,  \quad \tr(\alpha)=\alpha+\alpha^q+\alpha^{q^2}+\cdots+\alpha^{q^{l-1}}.$$
This is a surjective and $\F_q$-linear mapping. Let $\theta \in \F_Q$ such that $\tr(\theta)=1.$
Let $V$ be the set of the zeros of $\tr(x).$ Then $V$ is a $\F_q$-subspace of $\F_Q, |V|=q^{l-1}.$
For each $a \in \cM,$ the set $\cS_a=\{b \in \F_Q: \tr(b)=a\}$ is the coset $a\theta +V$ of $V$ in $\F_Q.$
Let $\cS=\bigcup\limits_{a \in \cM}\cS_a.$ Then $|\cS|=|\cM||\cS_a|=nq^{l-1}$ and
\begin{eqnarray*}
f_{\cS}(x) &=& \prod_{a \in \cM}\prod_{b \in \cS_{a}}(x-b)=\prod_{a \in \cM}\prod_{c \in V}(x-a\theta -c)\\
&=&\prod_{a \in \cM}\tr(x-a\theta)=\prod_{a \in \cM}(\tr(x)-a)=g(\tr(x))
\end{eqnarray*}
where $g(x)=\prod\limits_{a \in \cM}(x-a)=f_{\cM}(x).$ Therefore for $b=a\theta +c \in \cS_a, a \in \cM, c \in V,$
$$\Delta_{\cS}(b)=f^{\prime}_{\cS}(b)=f^{\prime}_{\cM}(\tr(b))=f^{\prime}_{\cM}(a)=\Delta_{\cM}(a).$$
Since $\eta_Q(\Delta_{\cS}(b))=\eta_Q(\Delta_{\cM}(a))=\eta_q(\Delta_{\cM}(a))^{\frac{Q-1}{q-1}} \ (b \in \cS)$
are the same, we get $n q^{l-1}=|\cS| \in \Sigma(g,Q)$ by Theorem \ref{thm-main} (1).

(2) Suppose that $2 \nmid n, 1 \leq n \leq q$ and $n+1 \in \Sigma(eg,q).$ Then there exists a subset $\cM$
of $\F_q, |\cM|=n$ such that $\eta_q(-\Delta_{\cM}(a))=1$ for all $a \in \cM.$ As in (1), let $\tr$
be the trace mapping from $\F_Q$ to $\F_q, \theta, V$ and $\cS_a \ (a \in \cM)$ are the same as in (1).
For $\cS=\bigcup\limits_{a \in \cM}\cS_a,$ we also have $\Delta_{\cS}(b)=\Delta_{\cM}(a)$ for $b \in \cS.$
Since
$$
\eta_Q(-\Delta_{\cS}(b))=\eta_Q(-\Delta_{\cM}(a))=\eta_q(-\Delta_{\cM}(a))^{\frac{Q-1}{q-1}}=1
$$
for all $a \in \cM,$ we get $nq^{l-1} +1 \in \Sigma(eg,q)$ by Theorem \ref{thm-main} (2).
\end{proof}

In the next construction we use the norm mapping in stead of the trace.
Let $\F_r$ be the subfield of $\F_q, q=r^s.$ The norm mapping for extension $\F_q/\F_r$ is
$$\textrm N  (x):\F^{\ast}_q \longrightarrow \F^{\ast}_r, \quad \textrm N(\alpha)=\alpha^{\frac{q-1}{r-1}}.$$

This is a surjective homomorphism of (multiplicative) groups. Let $\F^{\ast}_q=\langle \theta \rangle,$
the set of zeros of $ \textrm N (x)= x^{\frac{q-1}{r-1}}=1$ is the subgroup $B=\langle \theta^{r-1}\rangle$ of
$\F^{\ast}_q$ and for each $a \in \F^{\ast}_r,$
$$ \textrm N^{-1}(a)=\{ b \in \F^{\ast}_q: \textrm N(b)=b^{\frac{q-1}{r-1}}=a \}$$
is a coset $b^{\prime}B$ where $b^{\prime}$ is any element in $\F^{\ast}_q$
such that $(b^{\prime})^{\frac{q-1}{r-1}}=a.$

\begin{theorem}\label{thm-new3}
Let $r=p^m \ (p \geq 3)$ and $q=r^s.$

(1) If $s$ is even, and $ 1 \leq l \leq \frac{r-1}{2}$, then $l \cdot \frac{q-1}{r-1} \in \Sigma(g,q)$
and $l \cdot \frac{q-1}{r-1}+2 \in \Sigma(eg,q).$

(2) Suppose that $s$ is odd, $ 1 \leq l \leq r-1$

(2.1) If $l$ is even and $l \in \Sigma(g,r),$ then $l \cdot \frac{q-1}{r-1} \in \Sigma(g,q);$

(2.2) If $l$ is odd and $l+1 \in \Sigma(eg,r),$ then $l \cdot \frac{q-1}{r-1}+1 \in \Sigma(eg,q).$
\end{theorem}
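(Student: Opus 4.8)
The plan is to reuse the machinery of Lemma \ref{thm-three}(3) with the \emph{norm map} in the role of the monic polynomial $g$, in complete analogy with the way the trace was used in Theorem \ref{thm-new2}. Write $f=\frac{q-1}{r-1}$, so that $N(x)=x^{f}$ is monic of degree $f$ and, since $N:\F_q^\ast\to\F_r^\ast$ is surjective with kernel $B=\langle\theta^{r-1}\rangle$ of order $f$, the equation $x^f=a$ has exactly $f$ distinct roots in $\F_q$ for every $a\in\F_r^\ast$. Thus for any $\cM\subseteq\F_r^\ast$ with $|\cM|=l$ I would set $\cS=\bigcup_{a\in\cM}N^{-1}(a)$, so that $|\cS|=l\cdot\frac{q-1}{r-1}$, and Lemma \ref{thm-three}(3) gives, for $b\in N^{-1}(a)$,
\[
\Delta_{\cS}(b)=\Delta_{\cM}(a)\,N'(b)=\Delta_{\cM}(a)\,f\,b^{\,f-1}.
\]

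The first reductions I would record are purely arithmetic. Since $r=p^m\equiv0\pmod p$, one has $f=1+r+\cdots+r^{s-1}\equiv1\pmod p$, so $f=1$ in $\F_q$ and the factor $\eta_q(f)$ disappears. Next, $f\equiv s\pmod2$ because $r$ is odd, so $f$ is even exactly when $s$ is even. Finally, for $c\in\F_r^\ast$ one has $\eta_q(c)=c^{(q-1)/2}=(c^{(r-1)/2})^{f}=\eta_r(c)^{f}$; hence $\eta_q|_{\F_r^\ast}\equiv1$ when $s$ is even and $\eta_q|_{\F_r^\ast}=\eta_r$ when $s$ is odd (and likewise $\eta_q(-1)=\eta_r(-1)^f$). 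These identities convert everything into $\eta_r$-data, which is precisely what the hypotheses about $\Sigma(g,r)$ and $\Sigma(eg,r)$ control.

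For part (2) ($s$ odd, hence $f$ odd) the exponent $f-1$ is even, so $\eta_q(b^{f-1})=1$, and together with $\Delta_{\cM}(a)\in\F_r^\ast$ this yields $\eta_q(\Delta_{\cS}(b))=\eta_r(\Delta_{\cM}(a))$. I would take a witness $\cM$ for $l\in\Sigma(g,r)$ in case (2.1) (respectively for $l+1\in\Sigma(eg,r)$ in (2.2)); I may assume $\cM\subseteq\F_r^\ast$ because $|\cM|\le r-1$ and the defining condition is invariant under the translation $\cM\mapsto\cM+c$, which leaves every $\Delta$ unchanged. Then the constancy of $\eta_r(\Delta_{\cM}(a))$ (resp. the identity $\eta_r(-\Delta_{\cM}(a))=1$, combined with $\eta_q(-1)=\eta_r(-1)$) transports directly to $\cS$, and Theorem \ref{thm-main} closes both subcases.

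Part (1) ($s$ even, $f$ even) is where the genuine choice must be made, and I expect this to be the main obstacle: here $\eta_q|_{\F_r^\ast}\equiv1$ kills $\Delta_{\cM}(a)$, but $f-1$ is odd, so $\eta_q(\Delta_{\cS}(b))=\eta_q(b)$ really depends on which coset of $B$ the point $b$ lies in. Since $r-1$ is even every element of $B$ is a square, so $\eta_q$ is constant on each fibre $N^{-1}(a)$; writing $\omega=\theta^{f}$ (a generator of $\F_r^\ast$) one checks $N^{-1}(\omega^{i})=\theta^{i}B$, whence $\eta_q\equiv(-1)^{i}$ there. The trick is therefore to choose $\cM=\{\omega^{2},\omega^{4},\dots,\omega^{2l}\}$, which consists of $l$ distinct elements precisely because $2l\le r-1$; then $\eta_q(\Delta_{\cS}(b))=1$ for all $b\in\cS$, giving $l\cdot\frac{q-1}{r-1}\in\Sigma(g,q)$. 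For the extended statement I would adjoin $0$, set $\widetilde{\cS}=\cS\cup\{0\}$, and use Lemma \ref{thm-three}(2) to get $\Delta_{\widetilde{\cS}}(\gamma)=\Delta_{\cS}(\gamma)\gamma$ for $\gamma\in\cS$ and $\Delta_{\widetilde{\cS}}(0)=(-1)^{lf}\prod_{b\in\cS}b$. Because $q$ is an odd square we have $\eta_q(-1)=1$; each $\gamma\in\cS$ is a square (it lies in some $\theta^{2j}B$) and the product $\prod_{b\in\cS}b$ is built from $\omega^{2j}$ and from $B$, so it too is a square; hence $\eta_q(-\Delta_{\widetilde{\cS}}(\gamma))=1$ for every $\gamma\in\widetilde{\cS}$, and Theorem \ref{thm-main}(2) yields $l\cdot\frac{q-1}{r-1}+2\in\Sigma(eg,q)$.
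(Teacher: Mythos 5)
Your proposal is correct and follows essentially the same route as the paper's own proof: both use the norm map $N(x)=x^{(q-1)/(r-1)}$ so that $\cS$ is a union of fibres (cosets of $B=\langle\theta^{r-1}\rangle$), compute $\Delta_{\cS}$ via the composition rule of Lemma \ref{thm-three}(3), reduce everything through the identities $f\equiv 1 \pmod p$, $f\equiv s \pmod 2$, $\eta_q|_{\F_r^\ast}=\eta_r^{\,f}$, and in the even-$s$ case arrange all points of $\cS$ to be squares (your explicit choice $\cM=\{\omega^{2},\dots,\omega^{2l}\}$ with fibres $\theta^{2j}B$ is just a concrete version of the paper's choice $b_i=(b_i')^2$ over square values $a_i=e_i^2$). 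The translation trick for placing the witness sets inside $\F_r^\ast$ in part (2) also matches the paper.
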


\begin{proof}
Let $\rm N$ $(x)=x^{\frac{q-1}{r-1}}, \F^{\ast}_q=\langle \theta \rangle, B=\langle \theta^{r-1} \rangle,
\cM =\{ a_1, a_2, \cdots, a_l\} \subseteq \F^{\ast}_r,  1 \leq  l  \leq (r-1)/2.$
Let $b_i \in \F^{\ast}_q$ such that $\rm N$$(b_i)=a_i \ (1 \leq i \leq l).$ Let $\cS_i=b_i B$
and $\cS=\bigcup^{l}\limits_{i=1}\cS_i.$
Then
$$|\cS|=l \cdot \frac{q-1}{r-1} \equiv ls \  (\bmod ~2 ), \quad   \
f_{\cS}(x)=f_{\cM}(x^{\frac{q-1}{r-1}})$$
where $f_{\cM}(x)=\prod^l\limits_{i=1}(x-a_i).$
For each $\alpha \in \cS_i, \alpha=b_i c \ (c \in B),$
$$ \Delta_{\cS}(\alpha)=\frac{q-1}{r-1} f^{\prime}_{\cM}(\alpha^{\frac{q-1}{r-1}}) \alpha^{\frac{q-1}{r-1}-1}
=f^{\prime}_{\cM}(\alpha^{\frac{q-1}{r-1}})\alpha^{\frac{q-1}{r-1}-1}.$$
Since $\alpha^{\frac{q-1}{r-1}}=b^{\frac{q-1}{r-1}}_i=a_i, \frac{q-1}{r-1}-1 \in \F^{\ast}_p \subseteq \F^{\ast}_q$ and $\frac{q-1}{r-1}-1 \equiv s-1 \  (\bmod ~2),$ we have
$$ \Delta_{\cS}(\alpha)= \Delta_{\cM}(a_i)\alpha^{\frac{q-1}{r-1}-1}, \quad
\eta_q(\Delta_{\cS}(\alpha))=\eta_q(\Delta_{\cM}(a_i))\eta_q(\alpha)^{s-1}.$$
Then from $\alpha=b_i c, c \in \langle \theta^{r-1} \rangle, 2 \mid r-1$ and
$\Delta_{\cM}(a_i) \in \F^{\ast}_r$ we get
\begin{equation}\label{eqn-etaq}
\eta_q(\Delta_{\cS}(\alpha))=\eta_q(\Delta_{\cM}(a_i))\eta_q(b_i)^{s-1}
=\eta_r(\Delta_{\cM}(a_i))^s\eta_q(b_i)^{s-1}.
\end{equation}

(1) Suppose that $2 \mid s.$ Then $\eta_q(\Delta_{\cS}(\alpha))=\eta_q(b_i)$
for $\alpha=b_i c \in \cS_i.$ Let $\mathcal{Q}=\{a \in \F^{\ast}_r : \eta_r(a)=1\}.$
Then $|\mathcal{Q}|=\frac{r-1}{2},$ and by assumption $1 \leq l \leq \frac{r-1}{2},$
we can take a subset $\cM=\{a_1,a_2,\cdots, a_l\}$ of $\mathcal{Q}, a_i =e^2_i (e_i \in \F^{\ast}_r).$
Let $b^{\prime}_{i} \in \F^{\ast}_q$ such that $\textrm N (b^{\prime}_{i})=e_i.$
Then for $b_i=(b^{\prime}_{i})^2, \textrm N$ $(b_i)=e^2_i=a_i$ and
$\eta_q(b_i)=\eta_q(b^{\prime}_i)^2 =1\ (1 \leq i \leq l).$
Therefore $\eta_q(\Delta_{\cS}(\alpha))=1$ for all $\alpha \in \cS,$
and $|\cS| = ls \equiv 0 \  (\bmod ~2 ).$ By Theorem \ref{thm-main} (1) we get
$l \cdot \frac{q-1}{r-1} \in \Sigma(g,q).$

Moreover, let $\widetilde{\cS}=\cS \cup \{ 0 \}, | \cS|=l \cdot \frac{q-1}{l-1}+1$ is odd. By $2 \mid s$
we get $\eta_q(-\Delta_{\widetilde{\cS}}(\alpha))=1.$ For $\alpha=b_i c \ (c \in B=\langle \theta^{r-1} \rangle)$ and $ 2 \mid r-1,$ we get $\eta_q(\alpha)=1.$ Therefore
$\eta_q(-\Delta_{\widetilde{\cS}}(0))=\eta_q(\prod\limits_{\alpha \in \cS} \alpha)=1.$
 By Theorem \ref{thm-main} (2), we get $l \cdot \frac{q-1}{l-1}+2 \in \Sigma(eg,q).$

(2) Suppose that $2 \nmid s.$ Then $l \cdot \frac{q-1}{r-1} \equiv ls \equiv l \  (\bmod ~2 )$
and by (\ref{eqn-etaq}) we have $\eta_q(\Delta_{\cS}(\alpha))=\eta_r(\Delta_{\cM}(a_i))$
for $\alpha =b_i c \ (c \in B=\langle \theta^{r-1} \rangle ),$ $\textrm N(b_i)=a_i.$

(2.1) If $2 \mid l$ and $l \in \Sigma(g,r),$ there exists a subset
$\cM=\{a_1,a_2,\cdots, a_l\}$ of $\F_r$ such that $\eta_r(\Delta_{\cM}(a)) \ (a \in \cM)$
are the same. For any $a \in \F_r,$ let $a+\cM=\{a+a_1, a+a_2, \cdots, a+a_l\}.$
It is easy to see that $\Delta_{\cM}(a_i)=\Delta_{a+\cM}(a+a_i).$ Then by assumption
$1 \leq l \leq r-1,$ we can choose $\cM$ being a subset of $\F^{\ast}_r.$ Then by Theorem \ref{thm-main} (1)
and $\eta_q (\Delta_{\cS}(\alpha))=\eta_r(\Delta_{\cM}(a_i)) \ (\alpha=b_i c,$ $\textrm N$ $(b_i)=a_i)$
we get $$l \cdot \frac{q-1}{r-1}=|\cS| \in \Sigma(g,q).$$

(2.2) If $2 \nmid l$ and $l+1 \in \Sigma(eg,r),$ there exists a subset
$\cM=\{a_1,a_2,\cdots, a_l\}$ of $\F^{\ast}_r$ such that $\eta_r(-\Delta_{\cM}(a))=1$
for all $a \in \cM.$ Then $\eta_q(-\Delta_{\cS}(\alpha))=\eta_r(-\Delta_{\cM}(a))=1$
for all $\alpha \in \cS \ (\alpha=bc, c \in B, b^{\frac{q-1}{r-1}}=a \in \cM).$
By Theorem \ref{thm-main} (2) we get $l \cdot \frac{q-1}{r-1}+1 \in \Sigma(eg,q).$
\end{proof}

\section{Conclusion and Open Problems}\label{sec-five}

Basic on fundamental results ( Theorem \ref{thm-main} (1) and (2) ) on MDS self-dual codes constructed via generalized
RS codes and extended generalized RS codes given by Jin and Xing \cite{JX} and Yan \cite{Yan} respectively, we
 present an unified approach to treat previous constructions with simplified proofs and
concise statements and show several new constructions of MDS self-dual codes. Now we raise two open problems.

(1) To determine the set $\Sigma(g,q),\Sigma(eg,q)$ and $\Sigma(q)$ for certain $q=p^m \ (p \geq 3 \ \mbox{and} \ m \geq 1).$
Many even numbers in these three sets are known by previous constructions of MDS self-dual codes over $\F_q.$
On the other hand, the following result shows that there exist at least a half of even number which do not belong to
$\Sigma(q)$ for $q \equiv 3  (\bmod ~4).$

\begin{theorem}\label{thm-five}
If $q=p^m \equiv 3  (\bmod ~4)$ and $n \equiv 2  (\bmod ~4),$ then there is no self-dual codes over $\F_q$
with length $n.$ Particulary, $n \notin \Sigma(q).$
\end{theorem}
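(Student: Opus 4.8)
The plan is to reduce the problem to the question of whether $-1$ is a square in $\F_q$, by exploiting the structure of a generator matrix of a self-dual code. Suppose, for contradiction, that $\cC$ is a self-dual code over $\F_q$ of length $n$; then $\cC$ has dimension $k=n/2$ and is self-orthogonal, meaning $(c,c')=0$ for all $c,c'\in\cC$. First I would put a generator matrix into systematic form. Since a generator matrix $G$ has rank $k$, some $k$ of its columns are linearly independent; permuting the coordinates of $\F^n_q$ to bring these to the front yields an equivalent code. Crucially, a coordinate permutation $\sigma$ satisfies $(\sigma x,\sigma y)=(x,y)$, so it preserves orthogonality and hence self-duality, and the permuted code is still self-dual. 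After Gaussian elimination I may therefore assume $G=[\,I_k \mid A\,]$ for some $k\times k$ matrix $A$ over $\F_q$.

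Next I would translate self-orthogonality into a single matrix identity. The Gram matrix of the rows of $G$ is $GG^{\mathsf T}=I_k+AA^{\mathsf T}$, and self-orthogonality forces every entry to vanish, i.e. $AA^{\mathsf T}=-I_k$. In particular $A$ is invertible, and taking determinants gives
\[
\det(A)^2=\det(AA^{\mathsf T})=\det(-I_k)=(-1)^k .
\]
Thus $(-1)^k$ is a nonzero square in $\F_q$, so $\eta_q\big((-1)^k\big)=\eta_q(-1)^k=1$.

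Finally I would read off the arithmetic contradiction. When $n\equiv 2\ (\bmod~4)$ the dimension $k=n/2$ is odd, so the relation above becomes $\eta_q(-1)=1$, i.e. $-1$ is a square in $\F_q$. But $q=p^m\equiv 3\ (\bmod~4)$ is exactly the condition $\eta_q(-1)=-1$, a contradiction. Hence no self-dual code of length $n$ over $\F_q$ exists, and in particular $n\notin\Sigma(q)$.

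The only delicate point is the passage to systematic form: one must check that an information set can always be selected among the coordinates and that the coordinate permutation keeps the code self-dual. Both are routine, so I expect no serious obstacle here. Conceptually, the whole statement is the assertion that the standard bilinear form $\sum_i x_iy_i$, which has discriminant $1$, admits a totally isotropic subspace of dimension $n/2$ only when it is hyperbolic, and over $\F_q$ this requires $(-1)^{n/2}$ to be a square; the generator-matrix computation above is simply a hands-on version of this Witt-index argument, which I prefer because it stays within the elementary linear algebra already used in the paper.
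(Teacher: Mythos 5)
Your proposal is correct and follows essentially the same argument as the paper: reduce to a systematic generator matrix $[\,I_k \mid P\,]$, derive $PP^{\mathsf T}=-I_k$ from self-orthogonality, take determinants to conclude $(-1)^k$ is a square, and contradict $\eta_q(-1)=-1$ when $q\equiv 3 \ (\bmod~4)$ and $k$ is odd. Your extra care about the coordinate permutation preserving self-duality, and the closing remark on the Witt-index interpretation, are nice additions but do not change the route.
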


\begin{proof}
Suppose that $\cC$ is a self-dual code over $\F_q$ with length
$n=2k, k=\dim_{\F_q}\cC, \cC \subseteq \F^n_q.$ Up to an equivalent, we assume that the first $k$
symbols are information symbols. Then $\cC$ has a generator matrix in the following form
$$
\begin{gathered}
G=
\begin{bmatrix}
I_k & P
\end{bmatrix}=
\begin{bmatrix}
v_1 \\
\vdots \\
v_k
\end{bmatrix},
\quad
P=(p_{ij})_{1 \leq i,j \leq k}=\begin{bmatrix}
u_1 \\
\vdots \\
u_k
\end{bmatrix}
\end{gathered}
$$
$$v_i=(e_i | u_i),e_i=(0,\cdots,0,1,0,\cdots,0), u_i=(p_{i1},p_{i2},\cdots,p_{ik}) \ (1 \leq i \leq k)$$
the $i$-th places of $e_i$ is $1$, and other place of $e_i$ are $0.$ Since $\cC$ is self-dual, we get, for
$ 1 \leq i,j \leq k,0=v_i v^{T}_j=e_i e^T_j+u_i u^T_j=\delta_{ij}+u_i u^T_j, $
$$
\delta_{ij} =  \left \{
\begin{array}{ll}
1, & \mbox{if} \ i=j, \\
0, & \mbox{otherwise}.
\end{array}
\right.
$$
This means that $PP^T=-I_k,$ so that $(\det P)^2=(-1)^k.$ By assumption $q \equiv 3  (\bmod ~4)$
and $k=\frac{n}{2}\equiv 1  (\bmod ~2),$ we get $1=\eta_q (\det P)^2=\eta_q (-1)^k=-1,$
a contradiction. Therefore there is no self-dual code over $\F_q$ with length $n \equiv 2  (\bmod ~4).$
\end{proof}

The following conjecture is well-known.

\textbf{MDS Main Conjecture:} For all MDS codes over  $\F_q \ (2 \nmid q),$ the length $n$ is at most $q+1.$
This conjecture has been proved in \cite{Ball} when $q=p$ is an odd prime number.

(2) Is the MDS Main Conjecture true for MDS self-dual codes over $\F_q$ where $q=p^m,$ and $ m \geq 2?$

%\section*{Acknowledgements}

% K.Feng's research was supported by the Natural Science Foundation of China under Grant No:11571107 and 11471178.
% and Tsinghua National Lab. for Information Science and Technology.
% A.Zhang's research was supported by the Natural Science Foundation of China under Grant No:11401468.

\end{document}